\newcommand{\bone}[1]{\ensuremath{{\mathcal{K}}_b(#1)}}
\newcommand{\sone}[1]{\ensuremath{{\mathcal{K}}_s(#1)}}
\newcommand{\two}[1]{\ensuremath{{\mathcal{K}}_2(#1)}}
\newcommand{\tone}{type-\textsc{i}}
\newcommand{\ttwo}{type-\textsc{ii}}
\tikzstyle{myv} = [{circle,blue,draw,fill=black!50,inner sep=1pt}]
\newcommand{\NP}{\textsf{NP}}
\newcommand{\FPT}{\textsf{FPT}}
\newtheorem{redrule}{Rule}
\newtheorem{proposition}[theorem]{Proposition}
\title{A Polynomial Kernel for Diamond-Free Editing}
\titlerunning{A Polynomial Kernel for Diamond-Free Editing}
\author{Yixin Cao}{Department of Computing, Hong Kong Polytechnic University, China.}{yixin.cao@polyu.edu.hk}{0000-0002-6927-438X}{}
\author{Ashutosh Rai}{Department of Computing, Hong Kong Polytechnic University, China.}{ashutosh.rai@polyu.edu.hk}{}{}
\author{R. B. Sandeep}{Institute for Computer Science and Control, Hungarian Academy of Sciences, (MTA SZTAKI,) Hungary; and Indian Institute of Technology Dharwad, India.}{sandeeprb@iitdh.ac.in}{}{}
\author{Junjie Ye}{Department of Computing, Hong Kong Polytechnic University, China.}{junjie.ye@polyu.edu.hk}{0000-0003-3924-008X}{}
\authorrunning{Cao, Rai, Sandeep, Ye}
\subjclass{G.2.2 Graph Algorithms, I.1.2 Analysis of Algorithms}
\keywords{Kernelization, Diamond-free graph, H-free editing, graph modification problem}%mandatory
\begin{document}

\maketitle

\begin{abstract}
  Given a fixed graph $H$, the $H$-free editing problem asks whether we can edit at most $k$ edges to make a graph contain no induced copy of $H$. We obtain a polynomial kernel for this problem when $H$ is a diamond. The incompressibility dichotomy for $H$ being a 3-connected graph \cite{cai2015incompressibility} and the classical complexity dichotomy \cite{AravindSS17} suggest that except for $H$ being a complete/empty graph, $H$-free editing problems admit polynomial kernels only for a few small graphs $H$. Therefore, we believe that our result is an essential step toward a complete dichotomy on the compressibility of $H$-free editing. Additionally, we give a cubic-vertex kernel for the diamond-free edge deletion problem, which is far simpler than the previous kernel of the same size for the problem.
\end{abstract}

\section{Introduction}
\label{sec:intro}

A graph modification problem asks whether one can apply at most $k$ modifications to a graph to make it satisfy certain properties. By modifications we usually mean additions and/or deletions, and they can be applied to vertices or edges. 
Although other modifications are also considered, most results in literature are on vertex deletion and the following three edge modifications: edge deletion, edge completion, and edge editing (deletion/completion). 

As usual, we use $n$ to denote the number of vertices of the input graph.  For each graph modification problem, one may ask three questions: (1) Is it \NP-complete? (2) Can it be solved in time $f(k) \cdot n^{O(1)}$ for some function $f$, and if yes, what is the (asymptotically) best $f$? (3) Does it have a polynomial kernel?  The first question concerns the classic complexity, while the other two are about the parameterized complexity \cite{flum-grohe-06, downey-13}.   With parameter $k$, a problem is \emph{fixed-parameter tractable (\FPT)} if it can be solved in time $f(k)\cdot n^{O(1)}$ for some function $f$.   On the other hand, given an instance $(G, k)$, a {\em kernelization algorithm} produces in polynomial time an equivalent instance $(G', k')$---$(G, k)$ is a yes-instance if and only if $(G', k')$ is a yes-instance---such that $k' \leq k$.  It is a \textit{polynomial kernel} if the size of $G'$ is bounded from above by a polynomial function of $k$.

 For hereditary properties, a classic result of Lewis and Yannakakis \cite{lewis1980node} states that all the vertex deletion problems are either \NP-hard or trivial.
In contrast, the picture for edge modification problems is far murkier.  Earlier efforts for edge deletion problems \cite{Yannakakis81edge, el1988complexity}, though having produced fruitful concrete results, shed little light on a systematic answer, and it was noted that such a generalization is difficult to obtain. 

%Nowhere close to being able to solve the general problem, we are intrigued to consider special cases.  
A basic and ostensibly simple case of graph modification problems is to make the graph $H$-free, where $H$ is a fixed graph on at least two vertices. (We say that a graph is $H$-free if it does not contain $H$ as an induced subgraph.)  For this special case, all the three questions have been satisfactorily answered for vertex deletion problems, at least in the asymptotic sense.  All of them are \NP-complete and \FPT,---indeed, $H$-free vertex deletion problems admit simple $|V(H)|^k \cdot n^{O(1)}$-time algorithms~\cite{cai96fixed}.  On the other hand, the reduction of Lewis and Yannakakis~\cite{lewis1980node} excludes subexponential-time algorithms ($2^{o(k)} \cdot n^{O(1)}$-time algorithms) assuming the exponential time hypothesis (\textsc{eth}) \cite{impagliazzo-01-eth}.  Further, as observed by Flum and Grohe \cite{flum-grohe-06}, the sunflower lemma of Erd\H{o}s and Rado \cite{erdos-60-sunflower} can be used to produce polynomial kernels for $H$-free vertex deletion problems.

Even restricted to this very simple case, edge modification problems remain elusive.  Significant efforts have been devoted to an ongoing program that tries to answer these questions in a systematic way, and promising progress has been reported in literature.
Recently, Aravind et al.~\cite{AravindSS17} gave a complete answer to the first question: The $H$-free editing problem is \NP-complete if and only if $H$ contains at least three vertices. They also excluded subexponential-time algorithms for the \NP-complete $H$-free edge modification problems, assuming \textsc{eth}.  Noting that $H$-free edge modification problems can always be solved in  $2^{O(k)} \cdot n^{O(1)}$ time~\cite{cai96fixed}, we are left with the third problem, the existence of polynomial kernels.
%Recently, Aravind et al.~\cite{AravindSS17} gave a complete answer\todo{\tiny S: it is good to explicitly mention the classical complexity dichotomy at least for the editing version. This will give an answer to the first question of a reviewer:is diamond-free editing NP-complete?\\[3mm] C: Please add it.} to the first question by fully characterizing graphs $H$ on which the problems are \NP-complete.  They also excluded subexponential-time algorithms for the \NP-complete ones, assuming \textsc{eth}.  Noting that $H$-free edge modification problems can always be solved in  $2^{O(k)} \cdot n^{O(1)}$ time~\cite{cai96fixed}, we are left with the third problem, the existence of polynomial kernels.

Some of the $H$-free graph classes are important for their own structural reasons, e.g., most notably, cluster graphs and cographs, which are $P_3$-free graphs and $P_4$-free graphs respectively; hence the edge modification problems toward them have been well-studied  \cite{cao-12-kernel-cluster-editing, guillemot2013non}.  (Note that edge modification problems to $P_2$-free graphs, i.e., independent sets, are trivial.)  Given the simplicity of $H$-free edge modification problems, and the naive \FPT{} algorithms for them, it may sound shocking that many of them do \textit{not} admit polynomial kernels.  
Indeed, the earliest incompressibility results of graph modification problems, by Kratsch and Wahlstr\"om \cite{kratsch2013two}, are on $H$-free edge modification problems.  
%Assuming $\NP \subseteq co\NP/poly$, Kratsch and Wahlstr\"om \cite{kratsch2013two} constructed
%the first $H$ for which neither H-Free Edge Deletion nor H-Free Edge Editing admits polynomial kernels. 
Guillemot et al.~\cite{guillemot2013non} excluded polynomial kernels for $H$-free edge deletion problems when $H$ is a path of length at least seven or a cycle of length at least four.   An influential result of Cai and Cai~\cite{cai2015incompressibility} furnishes a dichotomy on the compressibility of $H$-free edge modification problems when $H$ is a path, a cycle, or a $3$-connected graph.  

% {\color{blue} $H$-free edge modification problems admit polynomial kernel when $H$ is a complete graph or an empty graph, by a reduction to \DHS\ problem.}

We tend to believe that $H$-free edge modification problems admitting polynomial kernel are the exceptions.  Our exploration suggests that graphs on four vertices play the pivotal roles if we want to fully map the territory.  Let $\overline H$ be the complement graph of $H$.  Then the $H$-free edge deletion problem is equivalent to the $\overline H$-free edge completion problem, while the edge editing problems are the same for $H$-free and $\overline H$-free graphs.  We are thus focused on the four-vertex graphs (Figure~\ref{fig:diamond}); see Table~\ref{table:1} for a summary of compressibility results of $H$-free edge modification problems when $H$ is one of them.  We conjecture that $H$-free edge modification problems, when $H$ being claw or paw, admit polynomial kernels. 

% , in an attempt to obtain a complete dichotomy on the incompressibility of $H$-free edge modification problems,  captured the phenomenon of ``propagation" and obtained a complete dichotomy for $H$-free edge modification problems with $H$ being path, cycle or 3-connected graph.  In particular, they proved that, for any 3-connected graph $H$, \HD\ and \HE\ are incompressible (assuming \NOPH) if and only if $H$ is not a complete graph and \HC\ is incompressible if and only if $H$ has at least two nonedges. The most general positive result so far on the compressibility of $H$-free edge modification problem is by Drange et al. \cite{drange16compressing}, which states that $\mathcal{H}$-free Edge Deletion and $\mathcal{H}$-free Editing admit polynomial kernel for degree bounded graphs.

% Recently, the classical complexities of $H$-free edge modification has been settled by 
%Aravind et al. \cite{AravindSS17}. They proved that \HE\ is \NP-complete if and only if $H$ is a graph
%on at least three vertices, \HD\ is \NP-complete if and only if $H$ has at least two edges, and
%\HC\ is \NP-complete if and only if $H$ has at least two nonedges. Additionally, the reductions given
%by them can be used to prove the incompressibility of many graphs $H$.

% Except the above mentioned cases, the classical dichotomy~\cite{AravindSS17} and the partial incompressibility dichotomy \cite{cai2015incompressibility} of $H$-free edge modification problems suggest that there are only a few small graphs $H$, for which the problem admits polynomial kernel. 

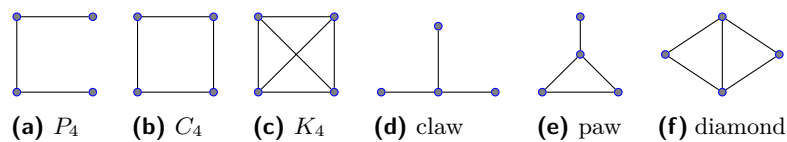
\begin{figure}[h]
  \centering
\begin{subfigure}[b]{.08\linewidth}
  \centering
    \begin{tikzpicture}[every node/.style={myv},scale=.5]
      \node (a) at (-1,0) {};
      \node (c) at (1,0) {};
      \node (b) at (-1,2) {};
      \node (d) at (1,2) {};
      \draw (d) -- (b) -- (a) -- (c);
    \end{tikzpicture}
\caption{$P_4$}\label{fig:p4}
\end{subfigure}  
\quad
\begin{subfigure}[b]{.08\linewidth}
  \centering
    \begin{tikzpicture}[every node/.style={myv}, scale=.5]
      \node (a) at (-1,0) {};
      \node (c) at (1,0) {};
      \node (b) at (-1,2) {};
      \node (d) at (1,2) {};
      \draw (a) -- (b) -- (d) -- (c) -- (a);
    \end{tikzpicture}
\caption{$C_4$}\label{fig:c4}
\end{subfigure}
\quad
\begin{subfigure}[b]{.08\linewidth}
  \centering
    \begin{tikzpicture}[every node/.style={myv}, scale=.5]
      \node (a) at (-1,0) {};
      \node (d) at (1,0) {};
      \node (b) at (-1,2) {};
      \node (c) at (1,2) {};
      \draw (a) -- (b) -- (c) -- (d) -- (a) -- (c) (b) -- (d);
    \end{tikzpicture}
\caption{$K_4$}\label{fig:k4}
\end{subfigure}
\quad
\begin{subfigure}[b]{.12\linewidth}
  \centering
    \begin{tikzpicture}[every node/.style={myv},scale=.25]
      \node (a1) at (-3., 0) {};
      \node (v) at (0, 0) {};
      \node (b1) at (3., 0) {};
      \node (c) at (0,3.5) {};
      \draw[] (a1) -- (v) -- (b1);
      \draw[] (v) -- (c);
    \end{tikzpicture}
\caption{claw}\label{fig:claw}
\end{subfigure}
\quad
\begin{subfigure}[b]{.08\linewidth}
  \centering
    \begin{tikzpicture}[every node/.style={myv},scale=.25]
      \node (s) at (0,4) {};
      \node (a1) at (-2,0) {};
      \node (b1) at (2,0) {};
      \node (c) at (0,2) {};
      \draw[] (c) -- (s);
      \draw[] (a1) -- (c) -- (b1) -- (a1);
    \end{tikzpicture}
\caption{paw}\label{fig:paw}
\end{subfigure}
\quad
\begin{subfigure}[b]{.12\linewidth}
  \centering
    \begin{tikzpicture}[every node/.style ={myv}, scale = .5]
      \node (x) at (0, 1) {};
      \node (u) at (-1.5, 0) {};
      \node (v) at (1.5, 0) {};
      \node (y) at (0, -1) {};
      \draw (u) -- (x) -- (v) -- (y) -- (u);
      \draw (x) -- (y);
    \end{tikzpicture}
\caption{diamond}\label{fig:diamond}
\end{subfigure}
  \caption{Graphs on four vertices (their complements are omitted).}
  \label{fig:diamond}
\end{figure}

\begin{table}[ht]
  \centering
  \begin{tabular}{l l l l}
    \toprule
    $H$ & deletion & completion & editing
    \\ \midrule
    $K_4$ & $O(k^4)$ \cite{cai2012polynomial} & trivial & $O(k^4)$ \cite{cai2012polynomial}
    \\
    $P_4$ & $O(k^3)$ \cite{guillemot2013non} & $O(k^3)$ \cite{guillemot2013non} & $O(k^3)$ \cite{guillemot2013non}
    \\
    diamond & $O(k^3)$ \cite{sandeep2015parameterized} & trivial & $O(k^8)$ [this paper]
    \\\midrule
    claw & unkown & unkown & unkown
    \\
    paw & unkown & unkown & unkown
    \\\midrule
    $C_4$ & no \cite{guillemot2013non} & no \cite{guillemot2013non} & no \cite{guillemot2013non}
    \\
    \bottomrule
  \end{tabular}
  \caption{The compressibility results of $H$-free edge modification problems for $H$ being four-vertex graphs.  Note that every result holds for the complement $H$; e.g., the answers are also no when $H$ is $2 K_2$.}
  \label{table:1}
\end{table}

% \begin{table}[ht]
%   \centering
%   \begin{tabular}{ccc|cc|c}
%     \toprule
%     $K_4$ & $P_4$ & diamond & claw & paw & $C_4$
%     \\ \midrule
%     $O(k^3)$ \cite{abu2010kernelization}   & $O(k^3)$ \cite{guillemot2013non} &  $O(k^8)$ [this paper] & unkown & unkown & no \cite{cai2015incompressibility}
%     \\
%     \bottomrule
%   \end{tabular}
%   \caption{The compressibility results of $H$-free edge editing for $H$ being four vertex graphs.  Note that every result holds for the complement $H$; e.g., the answer is also no when $H$ is $2 K_2$.}
%   \label{table:1}
% \end{table}

We show a polynomial kernel for the diamond-free editing problem.  Our observations also lead to a cubic-vertex kernel for the diamond-free edge deletion problem, which is far simpler than the previous kernel of the same size~\cite{sandeep2015parameterized}.  Formally, these two problems are defined as following. 
\begin{quote}
  {Diamond-free editing problem:} Given an input graph $G$, can we edit (add/delete) at most $k$ edges to make it diamond-free?
\end{quote}

\begin{quote}
  {Diamond-free edge deletion problem:} Given an input graph $G$, can we delete at most $k$ edges to make it diamond-free?
\end{quote}
On the other hand, the completion problem to diamond-free graphs is trivial: For each diamond, there is no choice but adding the only missing edge.

Our key observations are on the maximal cliques.  A graph $G$ is diamond-free if and only if every two maximal cliques of $G$ share at most one vertex.  We say that a maximal clique is of \emph{type \textsc{i}} if it shares an edge with another maximal clique, or \emph{type \textsc{ii}} otherwise.  It is not hard to see that to make a graph diamond-free, we should never delete edges from a sufficiently large clique.  We thus put the maximal cliques of $G$ into three categories, small type \textsc{i}, big type \textsc{i}, and type \textsc{ii}.
It turns out that a vertex participates in a diamond if and only if it is in a maximal clique of type \textsc{i}, and the small type-\textsc{i} maximal cliques are crucial for the problem.

%: Indeed, deleting such an edge would trigger one of the reduction rules to add it back.  

The first phase of our algorithm comprises two routine reduction rules for edge editing problems.  If a (non-)edge participates in $k + 1$ or more diamonds that pairwise share only this (non-)edge, then it has to be in a solution of size at most $k$.
(This is exactly the reason why no edge is deleted from a ``large'' clique.)
If there exists such an edge/non-edge, we delete/add it.  We may henceforth assume that these two simple rules have been exhaustively applied.
% These two properties give us four class of maximal cliques.  For the maximal cliques of type \textsc{ii}, sizes do not really matter.  Therefore, we consider three classes of maximal cliques, and try to bound the number of vertices contained in them.  (Shall we mention that they are not mutually disjoint?)
We are able to show that the ends of an edge added by a minimum solution must be from some small maximal cliques of type \textsc{i}\@. The situation for deleted edges is slightly more complex. 
The two ends of a deleted edge are either in a small maximal clique of type \textsc{i}, or in a maximal clique of type \textsc{ii}\@.  In the second case, the maximal clique has to intersect some small maximal clique of type \textsc{i}\@.
%\footnote{It has been observed in edge editing problems for several \textit{natural} classes that edge deletions are more difficult to grapple with than insertions \cite{cao-16-chordal-editing, cao-17-unit-interval-editing}. Albeit we are not able to define what count as natural, it is crucial here: the deletion would be completion, and vice versa, if we consider the editing problem to the complement graph class.}

The second phase of our algorithm uses three nontrivial reduction rules to delete irrelevant vertices.  To analyze the size of the kernel, we bound the number of vertices that are (a) in small type-\textsc{i} maximal cliques only, (b) in big type-\textsc{i} maximal cliques but not in any small type-\textsc{i} maximal clique, and (c) only in type-\textsc{ii} maximal cliques.
First, we show an upper bound on the number of type-\textsc{i} maximal cliques. This immediately bounds the number of vertices in part (a), because each small type-\textsc{i} maximal clique has a bounded size. For part (b), the focus now is to bound the sizes of big maximal cliques of type \textsc{i}.  We introduce another reduction rule to delete certain ``{private vertices}'' from them.  On the other hand, the pattern of vertices shared by big maximal cliques is very limited. We are thus able to bound the number of vertices in part (b), and we are left with part (c). We correlate a maximal clique $K$ of type \textsc{ii} with small maximal cliques of type \textsc{i}: we would touch $K$ only because it had become type \textsc{i} after some operation, and this operation has to be an edge addition. Recall that an edge can only be added between two vertices in part (a). For each pair of them, we can build a blocker of $O(k^2)$ vertices from part (c). One more reduction rule is introduced to remove all vertices behind the blockers. Together with the bound of vertices in part (a), this bounds the number of vertices in part (c).  %The final bound follows immediately.
% it is relatively easier to handle them, because we only need to bound its number.  As said, a graph is diamond-free if and only if it has no maximal clique of type \textsc{i}.  Informally speaking, the task of the problem is to convert type-\textsc{i} maximal cliques of the input graph into type \textsc{ii} (though they do not have the same set of maximal cliques).  It is not hard to observe that adding/deleting one edge can impact a limited number of type-\textsc{i} maximal cliques.
%
% This leads us to the vertices that appear in a maximal clique of type \textsc{ii} but not any maximal clique of type \textsc{i}\@.

Putting everything together, we obtain the main result of this paper.

\begin{theorem}\label{thm:main}
 The diamond-free editing problem has a kernel of $O(k^8)$ vertices. 
\end{theorem}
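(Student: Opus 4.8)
The plan is to design a sequence of polynomial-time reduction rules and then argue that after exhaustively applying them, the number of vertices is bounded by $O(k^8)$. Throughout I would work with a fixed minimum solution (assuming one of size at most $k$ exists) and exploit the structural characterization already stated in the excerpt: a graph is diamond-free iff every two maximal cliques share at most one vertex, and a vertex participates in a diamond iff it lies in a \tone{} maximal clique. The first phase applies the two routine ``sunflower-style'' rules: if an edge (resp.\ non-edge) is the common part of $k+1$ diamonds that pairwise meet only in this (non-)edge, then every solution of size $\le k$ must delete (resp.\ add) it, so we perform that edit. After these rules are saturated, I would use the claimed facts about where a minimum solution operates: the ends of an added edge lie in small \tone{} maximal cliques, and the ends of a deleted edge lie either in a small \tone{} maximal clique or in a \ttwo{} maximal clique that intersects a small \tone{} one.

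The core of the argument is the vertex-counting in the three parts (a), (b), (c). First I would bound the number of \tone{} maximal cliques. Because every \tone{} clique shares an edge with another maximal clique, each such shared edge sits in a diamond; combining with the saturated sunflower rules (which cap how many diamonds can pairwise share an edge/non-edge) I would derive a $\mathrm{poly}(k)$ bound on the count of \tone{} cliques. Since each \emph{small} \tone{} clique has size bounded by a constant-times-$\mathrm{poly}(k)$ (the ``small'' threshold is chosen around $k$), this immediately bounds part (a). For part (b) I would introduce the reduction rule that deletes ``private vertices'' from a big \tone{} clique $K$: a vertex private to $K$ (adjacent to no vertex outside $K$, say, and not separating anything relevant) is irrelevant once $K$ is large enough that no minimum solution touches its internal edges, so it can be removed while preserving equivalence. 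After this rule, the surviving vertices of a big \tone{} clique are those shared with other maximal cliques, and because the diamond-free condition forces the sharing pattern among big cliques to be highly restricted (pairwise intersections of size $\le 1$ on the diamond-free side, and the ``edge-sharing'' that defines type \textsc{i} is itself limited), I would bound part (b) in terms of the number of \tone{} cliques times a $\mathrm{poly}(k)$ factor.

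Part (c) is where I expect the main obstacle, and it is the most delicate because \ttwo{} cliques do not participate in any diamond in $G$ itself---they only become relevant through edits. The strategy I would follow is the one sketched in the overview: a \ttwo{} clique $K$ is only ``touched'' by the solution if it turns \tone{} after an operation, and that operation must be an edge \emph{addition} (deletions cannot create the required edge-sharing). Since an added edge has both ends in part (a), I would iterate over the $O(k^2)$ (recall part (a) is $\mathrm{poly}(k)$, so pairs are $\mathrm{poly}(k)$) candidate added edges; for each such pair I build a \emph{blocker} of $O(k^2)$ vertices drawn from part (c), and the final reduction rule removes every part-(c) vertex lying ``behind'' these blockers as irrelevant. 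The hardest part is proving the correctness of this blocker rule: I must show that any minimum solution can be rerouted so as to avoid the removed vertices, i.e.\ that the removed vertices are genuinely replaceable, which requires a careful exchange argument tracking how edits propagate through the clique structure and verifying that diamond-freeness is preserved after the substitution.

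Finally, I would multiply the three bounds. Part (a) contributes $\mathrm{poly}(k)$ vertices; part (b) contributes roughly (number of \tone{} cliques)$\times\mathrm{poly}(k)$; and part (c) contributes (number of candidate added-edge pairs)$\times O(k^2)$, i.e.\ $\mathrm{poly}(k)\times O(k^2)$. Tracking the exponents carefully---the number of \tone{} cliques and the small-clique size each cost a few powers of $k$, the pair count costs the square of the part-(a) bound, and the blocker costs $O(k^2)$---yields an overall bound of $O(k^8)$ on $|V(G')|$. Since each reduction rule runs in polynomial time and preserves equivalence with $k' \le k$, the composition is a valid kernelization, establishing the claimed $O(k^8)$-vertex kernel and hence Theorem~\ref{thm:main}.
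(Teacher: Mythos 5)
Your proposal follows the same route as the paper's proof --- indeed it is essentially the paper's own introductory overview (the two sunflower rules, the three-way split into small type-\textsc{i}, big type-\textsc{i}, and type-\textsc{ii} vertices, the private-vertex rule, and the blocker rule) --- but it stops exactly where the proof begins. The content of the theorem lives in the safety arguments and the quantitative lemmas that you flag as ``the hardest part'' and then leave unexecuted, so as written there is a genuine gap rather than a complete argument. Concretely: (1) The bound on part (a) is not obtained by ``bounding the number of type-\textsc{i} cliques from the saturated sunflower rules''; the sunflower rules alone do not bound how many small type-\textsc{i} cliques a reduced yes-instance can have. The paper's Lemma~\ref{lem:smallT1-vertex-count} instead fixes a minimum solution, sets $X$ (endpoints of edited pairs) and $Y$ (common neighborhoods of edited pairs, each of size at most $3k$ by Proposition~\ref{pro:edited-neighborhood}), and shows that every small type-\textsc{i} clique meeting $S(G)\setminus(X\cup Y)$ must contain an endpoint of a deleted edge together with a common neighbor of its two ends, giving at most $6k^2$ such cliques of size at most $3k+1$ each; without this mechanism you do not get $|S(G)|=O(k^3)$, and the exponent $8$ collapses. (2) Your notion of a ``private'' vertex (adjacent to nothing outside $K$) is not the one that works: the paper's Rule~\ref{rul:big-t1} deletes a \emph{guarded} vertex of $K$ lying in no \emph{other} type-\textsc{i} maximal clique, which may well have neighbors outside $K$ in type-\textsc{ii} cliques. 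With your stronger privacy condition too few vertices are removable and the $O(k^3)$ bound on $|K|$ in Lemma~\ref{lem:big-t1} does not follow. (3) The blocker rule's safety is not proved by ``rerouting'' a minimum solution around the deleted vertex $x$; the paper's Lemma~\ref{lem:vulnerable-vertices} takes a minimum solution of $G-x$ unchanged and shows it already solves $G$, the crux being that if some outside vertex $y$ received two added edges into the clique containing $x$, the $k+1$ marked common neighbors would each force an additional edge between $y$ and that clique, exceeding the budget. Supplying these three arguments (plus the double-counting that turns the $O(k^3)$ bound on $S(G)$ into $O(k^6)$ pairs times $O(k^2)$ marked vertices) is what actually yields $O(k^8)$.
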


In passing we would like to mention that the reduction rules in the second phase of our algorithm only delete vertices, and none of the deleted vertices is from a small maximal clique of type \textsc{i}.  Hence, the main structure of the graph will not change, and this allows us to run them only once, and more importantly, we do not need to re-run the reduction rules in the first phase. 

Before we conclude this section, let us have some remarks on future work.  The main purpose of this paper is to take a step toward a complete dichotomy on the compressibility of $H$-free edge modification problems.  Although our kernel for the diamond-free editing problem does not seem to be tight, we do not consider obtaining a smaller kernel a pressing issue.  Instead, we would ask for the existence of polynomial kernels for the $H$-free editing problem in general, and for $H$ being claw or paw in particular.  
%{\color{blue}As alluded to earlier, we believe that they are the only two graphs on which \HE\ admits polynomial kernels, aside from the known ones, $P_3$, $P_4$, diamond, and the somewhat trivial cases (of complete graphs and empty graphs).}

Another question one may ask for graph modification problems is: (4) Does it admit a constant-ratio approximation algorithm?  Again, the answer is simply yes for $H$-free vertex deletion problems. 
%by Lund and Yannakakis~\cite{lund-93-approximation-maximum-subgraph}.  Indeed, it is rather simple: 
Once an induced copy of $H$ is found, we delete all its vertices, which achieves ratio $|V(H)|$.  However, this simple algorithm breaks down for edge modification problems: An edge added or deleted to fix an erstwhile $H$ may introduce new one(s).  This kind of propagations are the core difficulty of these problems \cite{cai2015incompressibility, bliznets-16-hardness-h-free}.  Hence, both the approximability and the compressibility seem to hinge on whether the propagations can be tamed, though maybe in different senses.

\medskip
\noindent {\em Outline of the Paper.} After two simple reduction rules, Section~\ref{sec:max-cliques} studies maximal cliques in the graph on which these two rules are not applicable, and how a minimum solution may impact them.
Section~\ref{sec:kernel} presents the main reduction rules, and establishes the bounds on the numbers of vertices in different categories, thereby proving the main theorem of the paper. 
% Section~3 bounds the number of vertices in the first type of maximal cliques, and it introduces a new reduction rule to bound the number of vertices in such a clique.  Section~4 proposes two more reduction rules and presented a bound on the number of all other vertices.
Using the properties established in previous sections, the last section presents a very simple kernelization algorithm for the diamond-free edge deletion problem.

\section{Maximal cliques}
\label{sec:max-cliques}

All graphs discussed in this paper are undirected and simple.  A graph $G$ is given by its vertex set $V(G)$ and edge set $E(G)$.  %, whose cardinalities will be denoted by $n$ and $m$ respectively.  
% A vertex $u$ is \emph{adjacent} to $v$ in graph $G$ if $u v \in E(G)$, i.e., there is an edge between $u$ and $v$; \todo{S:we may avoid the definitions related to adjacency} we also say that $u$ and $v$ are adjacent.  
The \emph{neighborhood} of a vertex $v$ in a graph $G$, denoted by $N_G(v)$, consists of all the vertices adjacent to $v$ in $G$. We extend this to a set $S \subseteq V(G)$ of vertices by defining the neighborhood $N_G(S)$ of $S$ as $(\bigcup_{v \in S}  N_G(v)) \setminus S$.
%The \emph{closed neighborhood} of $v$ is defined as $N_G[v] = N_G(v)\cup \{v\}$. 
For a set $U\subseteq V(G)$ of vertices, we denote by $G[U]$ the subgraph induced by $U$, whose vertex set is $U$ and whose edge set comprises all edges of $G$ with both ends in $U$.  We use $G - v$, where $v$ is a vertex of $G$, as a shorthand for $G[V(G)\setminus \{v\}]$.  
%Since we are dealing with an edge editing problem, it does not make much sense to distinguish edges and non-edges: A non-edge of the input graph may be added so it becomes an edge of the resulting graph, and vice versa.  
In a diamond, we refer to the edge between the two degree-three vertices as the \textit{cross edge}, and the only non-edge the \textit{missing edge}.

For a set $E_+$ of edges, we denoted by $G + E_+$ the graph obtained by adding edges in $E_+$ to $G$,---its vertex set is still $V(G)$ and its edge set becomes $E(G)\cup E_+$.  The graph $G - E_-$ is defined analogously.
Throughout the paper we always tacitly assume $E_+\cap E(G)=\emptyset$ and $E_-\subseteq E(G)$; hence $E_+$ and $E_-$ are disjoint.
A \emph{solution} of an instance $(G, k)$ consists of a set $E_+$ of added edges and a set $E_-$ of deleted edges such that $G + E_+ - E_-$ is diamond-free and $|E_+\cup E_-| \le k$.
We use $E_\pm$ as a shorthand for $E_+\cup E_-$, and there should be no ambiguities: $E_+ = E_\pm\setminus E(G)$ and $E_- = E_\pm\cap E(G)$. We also use $G \triangle E_\pm$ as a shorthand for $G + E_+ - E_-$. 

We start from a routine step for edge editing problems.  If an edge $u v$ participates in $k + 1$ or more diamonds that pairwise share only this edge, then it has to be in any solution of size at most $k$. The following two reduction rules, taking care of the cases $u v$ being the missing edge and being the cross edge respectively, would suffice for our purpose. The correctness of them is straightforward: If we do not add/delete $uv$, then we have to delete/add at least $k + 1$ edges. %(non-edge or cross edge.)

\begin{redrule}
  \label{rul:nonedge-sunflower}
  If there exist a non-edge $uv$ and $2k + 2$ distinct vertices $x_1, y_1, \ldots, x_{k + 1}, y_{k + 1}$ in $N(u) \cap N(v)$ such that $x_i y_i\in E(G)$ for all $1\le i \le k + 1$, then add $uv$ and decrease $k$ by one. 
\end{redrule}

\begin{redrule}
  \label{rul:edge-sunflower}
  If there exist an edge $uv$ and $2k + 2$ distinct vertices $x_1, y_1, \ldots, x_{k + 1}, y_{k + 1}$ in $N(u) \cap N(v)$ such that $x_i y_i\not\in E(G)$ for all $1\le i \le k + 1$, then delete $uv$ and decrease $k$ by one.
\end{redrule}

For a pair of nonadjacent vertices $u, v$, whether Rule~\ref{rul:nonedge-sunflower} is applicable to $u v$ can be decided by finding a maximum matching in $G[N(u)\cap N(v)]$.  Likewise, for $u v\in E(G)$, we can find a maximum matching in the complement graph of $G[N(u)\cap N(v)]$.  Therefore, Rules~\ref{rul:nonedge-sunflower} and~\ref{rul:edge-sunflower} can be applied in polynomial time.
We call an instance $(G,k)$ \emph{reduced} if neither of Rules~\ref{rul:nonedge-sunflower} and~\ref{rul:edge-sunflower} is applicable to it.  In the rest, we will focus on reduced instances.
%For the same reason, we would never delete an edge from a clique on $2k + 4$ or more vertices. The threshold can even be lowered to $k + 3$. %strengthened
A similar idea as the two rules enables us to exclude some (non-)edges from consideration.
\begin{proposition}\label{pro:untouchable-edge}
  Let $E_\pm$ be a solution to a yes-instance $(G, k)$.  A (non-)edge $u v$ cannot be in $E_\pm$ if
% For $uv \in E(G)$ (resp., $uv \notin E(G)$), if there are $k+1$ common neighbors of $u,v$ that form a clique (resp., an independent set), then $E_-$ (resp., $E_+$) contains no $uv$.   
  \begin{enumerate}[(i)]
  \item $uv \in E(G)$ and there are $k+1$ pairwise adjacent vertices in $N(u)\cap N(v)$; or
  \item $uv \not\in E(G)$ and there are $k+1$ pairwise nonadjacent vertices in $N(u)\cap N(v)$.
  \end{enumerate}
%  In particular, any clique in $G$ of size at least $k+3$ remains a clique in $G - E_-$. 
\end{proposition}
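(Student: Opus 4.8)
The plan is to argue in the contrapositive, exactly in the spirit of Rules~\ref{rul:nonedge-sunflower} and~\ref{rul:edge-sunflower}: assuming that $uv$ does lie in $E_\pm$, I will exhibit at least $k+1$ distinct edits inside $E_\pm$, contradicting $|E_\pm|\le k$. The engine of the argument is a structural fact about diamond-free graphs that I would isolate first. Write $G' := G\triangle E_\pm$, which is diamond-free. Then for any two adjacent vertices $a,b$ of $G'$ the common neighborhood $N_{G'}(a)\cap N_{G'}(b)$ is a clique, and for any two non-adjacent vertices it is an independent set; indeed, a pair $x,y$ of common neighbors of $a,b$ with the ``wrong'' adjacency would, together with $a,b$, induce a diamond (with cross edge $ab$ in the adjacent case, and with missing edge $ab$ in the non-adjacent case).

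Consider case~(i); case~(ii) will be entirely symmetric under complementation. Let $w_1,\dots,w_{k+1}$ denote the $k+1$ pairwise adjacent vertices in $N(u)\cap N(v)$, so that $\{u,v,w_1,\dots,w_{k+1}\}$ is a clique of $G$. Suppose $uv\in E_\pm$; since $uv\in E(G)$ this forces $uv\in E_-$, so $u$ and $v$ are non-adjacent in $G'$. By the fact above, $N_{G'}(u)\cap N_{G'}(v)$ is independent in $G'$. I would then split the $w_i$ according to whether they survive as common neighbors: let $C=\{\,i : w_i\in N_{G'}(u)\cap N_{G'}(v)\,\}$. For each $i\notin C$, at least one of the edges $uw_i,vw_i$ of $G$ must have been deleted, which contributes $(k+1)-|C|$ distinct deletions incident to $u$ or $v$. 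For the indices in $C$, the vertices $\{w_i:i\in C\}$ form a clique in $G$ but must be independent in $G'$, so all $\binom{|C|}{2}$ edges among them are deleted. These two families consist of distinct edges (the first incident to $\{u,v\}$, the second not), and all of them differ from $uv$.

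Writing $c=|C|$, the number of edits forced \emph{besides} $uv$ is therefore at least $\binom{c}{2}+\bigl((k+1)-c\bigr)$, and the crux is to verify that this is always at least $k$. This reduces to $\binom{c}{2}-c+1\ge 0$, i.e.\ $(c-1)(c-2)\ge 0$, which holds for every integer $c$ (with equality exactly at $c\in\{1,2\}$). Hence $E_\pm$ contains $uv$ together with at least $k$ further edits, so $|E_\pm|\ge k+1$, the desired contradiction. The part I expect to require the most care is choosing the right dichotomy (survive versus do not survive as a common neighbor) so that the forced edits are manifestly distinct; once the structural lemma is in place, this dichotomy makes the counting clean, and the elementary inequality $(c-1)(c-2)\ge 0$ closes both cases at once.
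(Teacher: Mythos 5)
Your proof is correct and follows essentially the same route as the paper's: your set $C$ of surviving common neighbors is exactly the complement of the paper's set $X$, and your count $1+\bigl((k+1)-c\bigr)+\binom{c}{2}\ge k+1$ is the paper's $1+|X|+(|K\setminus X|-1)=k+1$ with the elementary inequality $\binom{c}{2}\ge c-1$ made explicit via $(c-1)(c-2)\ge 0$. The only cosmetic difference is that you state the diamond-free common-neighborhood fact as a separate lemma, which the paper uses implicitly.
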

\begin{proof}
  Suppose for contradiction to assertion (i) that $uv\in E_-$.  Let $K$ be a set of $k+1$ pairwise adjacent vertices in $N(u)\cap N(v)$, and let $X \subseteq K$ be the set of vertices $x$ with $xu$ or $xv$ in $E_-$. Vertices in $K \setminus X$ remain adjacent to both $u$ and $v$ in $G - E_-$.  Since the final graph is diamond-free, $E_-$ must contain all edges among $K \setminus X$.  Therefore, 
  \[
    |E_-| \ge 1 + |X| + (|K \setminus X| - 1) = k + 1,
  \]
  which is impossible because $E_\pm$ is a solution to $(G, k)$.
  The argument for assertion (ii) is similar and hence omitted.
\end{proof}

  % \begin{proposition}\label{pro:untouchable-edge}
%  Let $E_\pm$ be a solution to a yes-instance $(G, k)$.  Any clique of $G$ of size at least $k+3$ remains a clique in $G - E_-$.
%\end{proposition}
%\begin{proof}
%  Suppose for contradiction that an edge $u v$ is deleted from a clique $K$ of at least $k+3$ vertices in $G$.  Let $K' = K\setminus \{u, v\}$ and let $X$ be the set of vertices $x\in K'$ with $xu$ or $xv$ in $E_-$.  Vertices in $K'\setminus X$ remaining adjacent to both $u$ and $v$ in $G - E_-$.  Since the final graph is diamond-free, $E_-$ must contain all edges among $K'\setminus X$.  Therefore, 
%  \[
%    |E_-| \ge 1 + |X| + (|K'\setminus X| - 1) = k + 1,
%  \]
%  which is impossible because $(G, k)$ is a yes-instance and $E_\pm$ is a minimum solution to $(G, k)$.
%\end{proof}

\begin{proposition}\label{pro:edited-neighborhood}
  Let $(G, k$) be a reduced yes-instance.  For any (non-)edge $uv$ in a solution of $(G, k$), the cardinality of $N(u) \cap N(v)$ is at most $3k$.
\end{proposition}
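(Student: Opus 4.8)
The plan is to write $W := N(u) \cap N(v)$ and to control $|W|$ through two parameters of the induced subgraph $G[W]$: its matching number $\nu(G[W])$ and its independence number $\alpha(G[W])$. The point is that reducedness bounds one of these, Proposition~\ref{pro:untouchable-edge} bounds the other, and together the two bounds control $|W|$.

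Consider first the case $uv \notin E(G)$, so that $uv$ is added by the solution. Because $(G,k)$ is reduced, Rule~\ref{rul:nonedge-sunflower} does not apply to $uv$; as the $x_i, y_i$ in that rule are $2k+2$ distinct vertices of $W$ with each $x_i y_i \in E(G)$, they are exactly the endpoints of a matching of size $k+1$ in $G[W]$, so non-applicability says precisely that $\nu(G[W]) \le k$. Since $uv$ belongs to a solution, Proposition~\ref{pro:untouchable-edge}(ii) rules out $k+1$ pairwise nonadjacent vertices in $W$, giving $\alpha(G[W]) \le k$. I would then invoke the elementary fact that, for a maximum matching $M$ of $G[W]$, the vertices left uncovered by $M$ form an independent set (an uncovered edge would augment $M$). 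Hence at most $\alpha(G[W])$ vertices are uncovered while $M$ covers $2\,\nu(G[W])$ vertices, so
\[
  |W| \le 2\,\nu(G[W]) + \alpha(G[W]) \le 2k + k = 3k.
\]

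The remaining case $uv \in E(G)$, an edge deleted by the solution, is exactly dual under complementation. There Rule~\ref{rul:edge-sunflower} and Proposition~\ref{pro:untouchable-edge}(i) bound, respectively, the matching number and the independence number of the complement $\overline{G[W]}$ by $k$ (the latter because $k+1$ pairwise adjacent vertices in $G[W]$ are $k+1$ pairwise nonadjacent vertices in $\overline{G[W]}$), so running the identical counting argument inside $\overline{G[W]}$ again yields $|W| \le 3k$.

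I do not anticipate a genuine obstacle; the argument is short. The only points needing care are the two translations—reading the reduction rules as ``no large matching'' statements and reading Proposition~\ref{pro:untouchable-edge} as ``no large independent set/clique'' statements—and pairing them correctly in the two dual cases, so that the matching bound is always combined with the complementary independence or clique bound.
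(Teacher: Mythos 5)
Your proof is correct and follows essentially the same route as the paper's: the paper likewise takes a maximum matching (in the complement of $G[N(u)\cap N(v)]$ for the deletion case), bounds its $2k$ endpoints via the non-applicability of the corresponding reduction rule, and bounds the remaining vertices---which form a clique/independent set by maximality of the matching---by $k$ via Proposition~\ref{pro:untouchable-edge}. The only cosmetic difference is that the paper works out the deleted-edge case explicitly and omits the added-edge case as symmetric, whereas you do the reverse.
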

\begin{proof}
  We consider only $u v\in E_-$, and the argument for $uv\in E_+$ is similar and omitted.
  Let $W = N(u) \cap N(v)$;  we find a maximum matching in the complement graph of $G[W]$, and let $W'$ be the ends of the edges in the matching.  Since Rule~\ref{rul:edge-sunflower} is not applicable (to $u v$), $|W'| \le 2 k$.  There cannot be non-edges between vertices in $W\setminus W'$; then by Proposition~\ref{pro:untouchable-edge}(i), the size of $W\setminus W'$ is at most $k$.  Therefore, $|W| \le 3k$.  
\end{proof}

Our algorithm will be mostly concerned with maximal cliques.  According to Proposition~\ref{pro:untouchable-edge}(i), a maximal clique on $k + 3$ or more vertices cannot be touched by a minimum solution ``{from inside},'' but it  may be touched ``{from outside}''---i.e., edges may be added between it and other vertices.   
We call a maximal clique \textit{big} if it contains at least $3k + 2$ vertices, and \textit{small} otherwise.  The bigness will prevent a maximal clique from being touched from outside.
\begin{lemma}\label{lem:bigT1}
  Let  $(G,k)$ be a reduced instance.
  \begin{enumerate}[(i)]
  \item\label{item:bigT1:no-sharing} Two big maximal cliques of $G$ share at most one vertex.
  \item\label{item:bigT1:max-clique}If  $(G,k)$ is a yes-instance, then a big maximal clique of $G$ remains a maximal clique after applying a solution to $(G,k)$.
  \end{enumerate}  
\end{lemma}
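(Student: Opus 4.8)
The plan is to isolate a single quantitative fact about big maximal cliques in a reduced instance and then derive both assertions from it. The fact I would establish first is the following \emph{key claim}: in a reduced instance $(G,k)$, if $K$ is a big maximal clique and $w\notin K$, then $w$ has at least $k+1$ non-neighbors in $K$. To prove it, set $D=\{z\in K : wz\notin E(G)\}$. Maximality of $K$ forces $D\neq\emptyset$, so suppose for contradiction that $1\le |D|\le k$. Then $K_0=K\setminus D$, the set of neighbors of $w$ inside $K$, is a clique of size at least $(3k+2)-k=2k+2$ contained entirely in $N(w)$. Choosing any $z\in D$, every vertex of $K_0$ is also adjacent to $z$ (all of them lie in the clique $K$), so $K_0\subseteq N(w)\cap N(z)$. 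A clique on $2k+2$ vertices contains $k+1$ independent edges, which together with the non-edge $wz$ make Rule~\ref{rul:nonedge-sunflower} applicable to $wz$, contradicting reducedness. Hence $|D|\ge k+1$.

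Given the key claim, assertion~(\ref{item:bigT1:max-clique}) is almost immediate. First, no edge of a big clique is ever deleted: for $xy$ with $x,y\in K$, the set $K\setminus\{x,y\}\subseteq N(x)\cap N(y)$ is a clique of size at least $3k\ge k+1$, so Proposition~\ref{pro:untouchable-edge}(i) gives $xy\notin E_\pm$. Thus $K$ stays a clique in $G\triangle E_\pm$. Second, no outside vertex can be glued onto it: any $w\notin K$ has at least $k+1$ non-neighbors in $K$ by the key claim, and turning all of them into neighbors would require adding at least $k+1$ edges, exceeding the budget $k$. Hence at least one pair $wz$ with $z\in K$ survives as a non-edge, so $K\cup\{w\}$ is not a clique after editing. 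Being a clique that no vertex extends, $K$ remains maximal.

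For assertion~(\ref{item:bigT1:no-sharing}), I would argue by contradiction: suppose two big maximal cliques $K_1,K_2$ share two vertices $u,v$, and note $uv\in E(G)$. Write $A=K_1\setminus K_2$ and $B=K_2\setminus K_1$; both are nonempty because the cliques are distinct and maximal, and $A\cup B\subseteq N(u)\cap N(v)$. Every $y\in B$ is adjacent to all of $K_1\cap K_2$, so its at-least-$(k+1)$ non-neighbors in $K_1$ (key claim) all lie in $A$; symmetrically every $x\in A$ has at least $k+1$ non-neighbors in $B$. Therefore the bipartite graph of non-edges between $A$ and $B$ has minimum degree at least $k+1$, and a bipartite graph of minimum degree $t$ always contains a matching of size $t$. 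This produces $k+1$ pairwise disjoint non-edges $x_1y_1,\dots,x_{k+1}y_{k+1}$ inside $N(u)\cap N(v)$, each forming a diamond with cross edge $uv$; they witness Rule~\ref{rul:edge-sunflower} at $uv$, contradicting reducedness. Hence $|K_1\cap K_2|\le 1$.

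The main obstacle is assertion~(\ref{item:bigT1:no-sharing}): a shared edge $uv$ trivially yields \emph{one} diamond, but Rule~\ref{rul:edge-sunflower} needs $k+1$ diamonds sharing only $uv$, i.e. a size-$(k+1)$ matching among the non-edges between $A$ and $B$. The naive observation that ``each outside vertex misses at least one clique vertex'' is too weak here, since it permits a star-like non-edge pattern whose matching is tiny; this is exactly why the quantitative key claim---at least $k+1$ misses per outside vertex---is the linchpin, as it upgrades the non-edge pattern to minimum degree $k+1$ and thus to a large matching. I would also verify that the degenerate regimes (small $k$, or a large intersection $K_1\cap K_2$) are absorbed by the bound $3k+2$, but these checks are routine.
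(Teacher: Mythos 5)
Your proposal is correct and follows essentially the same route as the paper: reducedness under Rule~\ref{rul:nonedge-sunflower} forces every vertex outside a big maximal clique to have at least $k+1$ non-neighbors inside it (the paper states this as ``at most $2k+1$ neighbors''), which yields assertion~(\ref{item:bigT1:max-clique}) together with Proposition~\ref{pro:untouchable-edge}(i), and yields assertion~(\ref{item:bigT1:no-sharing}) by extracting $k+1$ disjoint non-edges between $K_1\setminus K_2$ and $K_2\setminus K_1$ to trigger Rule~\ref{rul:edge-sunflower}. The only cosmetic difference is that you package the matching step as ``minimum degree $t$ implies a matching of size $t$'' where the paper argues greedily.
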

\begin{proof}
  Let $K_1$ and $K_2$ be two big maximal cliques of $G$.  Suppose first that some vertex $u \in K_1 \setminus K_2$ is adjacent to more than $2k+1$ vertices in $K_2$.  Since $K_2$ is a maximal clique, we can find $v\in K_2 \setminus K_1$ nonadjacent to $u$, but then Rule~\ref{rul:nonedge-sunflower} would be applicable (to $u v$).  Hence, every vertex in $K_1 \setminus K_2$ has at most $2k+1$ neighbors in $K_2$, which implies   $|K_1\cap K_2| \le 2k + 1$.  By assumption, $|K_1| \ge 3k+2$ and $|K_2| \ge 3k+2$.  For each vertex in $K_1 \setminus K_2$, we can find $k+1$ non-neighbors in $K_2 \setminus K_1$.  Therefore, we can greedily find $k + 1$ pairs of distinct vertices $\{x_1, y_1\}$, $\ldots$, $\{x_{k + 1}, y_{k+1}\}$ such that for all $1\le i\le k + 1$, (a) $x_i\in K_1 \setminus K_2$ and $y_i\in K_2 \setminus K_1$; and (b) $x_i y_i\not\in E(G)$.  Rule~\ref{rul:edge-sunflower} would be applicable (to any edge in $G[K_1 \cap K_2]$) if $|K_1 \cap K_2| \ge 2$.  Therefore, $|K_1 \cap K_2| \le 1$, and this concludes the proof for assertion (i).
  
  Let $E_\pm$ be a solution to $(G,k)$ and $G^* = G \triangle E_\pm$.  By Proposition~\ref{pro:untouchable-edge}(i), a big maximal clique $K$ in $G$ remains a clique in $G^*$.  Let $v\in V(G)\setminus K$ and let $u\in K\setminus N_G(v)$.  Since  Rule~\ref{rul:nonedge-sunflower} is not applicable to $uv$, there are at most $2k + 1$ neighbors of $v$ in $K$.  Since $|K| \geq 3k+2$, at least one vertex in $K$ remains nonadjacent to $v$ in $G^*$ because $|E_+|\le k$.  Therefore, $K$ is a maximal clique in $G^*$ as well.
\end{proof}

 It is well known that a graph $G$ is diamond-free if and only if every pair of adjacent vertices is contained in exactly one maximal clique of $G$. 
 This characterization turns out to be fundamental for the results in this paper.  We say that a maximal clique of $G$ is of \emph{type \textsc{i}} if it shares two or more vertices with some other maximal clique, and \emph{type \textsc{ii}} otherwise (its intersection with any other maximal clique is either $0$ or $1$ vertex).
We can then rephrase the first sentence of this paragraph as: A graph is diamond-free if and only if it has no maximal clique of type \textsc{i}.

We use \bone{G}, \sone{G}, and \two{G} to denote, respectively, the set of big maximal cliques of type \textsc{i}, the set of small maximal cliques of type \textsc{i}, and the set of maximal cliques of type \textsc{ii}, of $G$.  A maximal clique in $G$ is in precisely one of them.

\begin{proposition}\label{obs:diamond-vertices}
  A vertex of a graph $G$ is in a maximal clique of type \textsc{i} if and only if it is contained in an induced diamond in $G$.
\end{proposition}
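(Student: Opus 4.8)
The plan is to prove both directions directly from the definition of type-\textsc{i} maximal cliques and from the maximality of the cliques involved; no extra machinery is needed.

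For the ``if'' direction (an induced diamond through $v$ forces $v$ into a type-\textsc{i} clique), suppose $v$ lies in an induced diamond on $\{p,q,r,s\}$, where $pq$ is the cross edge and $rs$ the missing edge. Then $\{p,q,r\}$ and $\{p,q,s\}$ are two triangles sharing the edge $pq$. I would extend each to a maximal clique, say $K_r \supseteq \{p,q,r\}$ and $K_s \supseteq \{p,q,s\}$. Since $r$ and $s$ are nonadjacent, $r \notin K_s$, so $K_r \ne K_s$; moreover $\{p,q\} \subseteq K_r \cap K_s$, whence both $K_r$ and $K_s$ are of type \textsc{i}. As $v$ is one of $p,q,r,s$, it lies in $K_r$ or in $K_s$, and we are done.

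For the ``only if'' direction, let $v$ be in a type-\textsc{i} maximal clique $K$, so there is a maximal clique $K' \ne K$ with $|K \cap K'| \ge 2$; fix two distinct vertices $a, b \in K \cap K'$. Since $K$ and $K'$ are distinct maximal cliques, both $K \setminus K'$ and $K' \setminus K$ are nonempty. A small but crucial observation is that not every pair $(x,y) \in (K\setminus K') \times (K'\setminus K)$ can be an edge: otherwise $K \cup K'$ would be a clique properly containing $K$, contradicting the maximality of $K$. Hence there are $x \in K\setminus K'$ and $y \in K'\setminus K$ with $xy \notin E(G)$, and then $\{a,b,x,y\}$ already induces a diamond (cross edge $ab$, missing edge $xy$). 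The only delicate point is to steer the diamond through the \emph{prescribed} vertex $v$, so I would split into cases on the location of $v$ in $K$. If $v \in K \cap K'$, I pick $a \in (K\cap K')\setminus\{v\}$ and use the diamond $\{v,a,x,y\}$ with cross edge $va$ and missing edge $xy$. If instead $v \in K \setminus K'$, then $v \notin K'$, so by maximality of $K'$ some vertex $y$ of $K'$ is nonadjacent to $v$; such $y$ must lie in $K'\setminus K$ (a vertex of $K\cap K'$ would be adjacent to $v$ inside $K$), and then $\{a,b,v,y\}$ induces a diamond with cross edge $ab$ and missing edge $vy$, containing $v$.

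The main obstacle is precisely this last bookkeeping: exhibiting \emph{some} diamond is immediate, but guaranteeing one that passes through the given vertex $v$ forces the case distinction on whether $v$ lies in the shared part $K\cap K'$ or in the private part $K\setminus K'$, together with a repeated appeal to maximality to locate a suitable non-neighbour. Everything else follows directly from the clique structure.
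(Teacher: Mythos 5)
Your proof is correct and follows essentially the same route as the paper: the ``if'' direction extends the two triangles of the diamond to distinct maximal cliques sharing the cross edge, and the ``only if'' direction uses maximality to find a non-neighbour and build a diamond through the prescribed vertex. The paper merely compresses your two cases into a single construction (its diamond $\{u,v,x,y\}$ simultaneously contains a shared vertex $u$ and a private vertex $x$, so either role can be assigned to the given vertex), but the underlying argument is the same.
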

\begin{proof}
  Let $u, v, x, y$ be four vertices inducing a diamond in $G$ with cross edge $u v$.  We can find a maximal clique $K_1$ containing $u, v, x$ and a maximal clique $K_2$ containing $u, v, y$.  They are different because $x \in K_1\setminus K_2$ and $y\in K_2\setminus K_1$, hence both of type \textsc{i}\@.

  We now consider the ``only if'' direction.  Let $K_1$ be a maximal clique of type \textsc{i}; by definition, there is another maximal clique $K_2$ such that $K_1 \cap K_2 \ge 2$.
  For any vertex $x \in K_1\setminus K_2$ and any vertex $u \in K_1 \cap K_2$, we can find another vertex $v \in K_1 \cap K_2$ different from $u$ and a vertex $y \in K_2 \setminus K_1$ not adjacent to $x$ (because $K_2$ is maximal).  Clearly, these four vertices induce a diamond with cross edge $u v$.
\end{proof}

The following two statements help us understand edges added by a minimum solution.
\begin{proposition}\label{lem:replacement}
  Let $G$ be a diamond-free graph, and let $U \subseteq V(G)$ such that every vertex in $V(G) \setminus U$ is adjacent to at most one vertex of $U$.  If $G[U] \triangle E_\pm$ is diamond-free for a set $E_+$ of non-edges in $G[U]$ and a set $E_-$ of edges in $G[U]$, then so is $G \triangle E_\pm$.
\end{proposition}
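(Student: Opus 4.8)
The plan is to argue by contradiction. Suppose $G^* := G \triangle E_\pm$ contains an induced diamond $D$ on four vertices, and derive a contradiction to one of the two hypotheses (that $G$ is diamond-free, or that $G[U] \triangle E_\pm$ is diamond-free). The crucial preliminary observation is that $E_+$ consists of non-edges of $G[U]$ and $E_-$ of edges of $G[U]$, so every pair modified by the solution has \emph{both} ends in $U$. Consequently $G$ and $G^*$ agree on every pair of vertices having at least one end outside $U$. Writing $S := V(D) \cap U$ and $T := V(D) \setminus U$, this means that $G[V(D)]$ and $G^*[V(D)]$ can differ only on pairs lying inside $S$. Moreover, since the $T$--$U$ adjacencies are never touched, the hypothesis survives into $G^*$: every vertex of $T$ still has at most one neighbor in $U$, hence at most one neighbor in $S$.

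I would then split into cases according to $|S|$. When $|S| \le 1$ there are no modifiable pairs among $V(D)$, so $G[V(D)] = G^*[V(D)]$ is an induced diamond already present in $G$, contradicting that $G$ is diamond-free. When $|S| = 4$, the diamond lies entirely inside $U$, so it is an induced diamond of $G^*[U] = G[U] \triangle E_\pm$, contradicting the assumption that the latter is diamond-free. These two extremes use, respectively, each of the two hypotheses directly.

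The remaining cases $|S| \in \{2,3\}$ are the heart of the argument, and this is where the degree hypothesis on vertices outside $U$ is indispensable: here some (but not all) of the edges of $D$ sit inside $S$ and could have been altered, so neither extreme argument applies verbatim. My plan is a simple edge count in $G^*[V(D)]$, which must be an induced diamond and hence have exactly five edges. I would bound the edges by the three types: at most $\binom{|T|}{2}$ edges inside $T$, at most $\binom{|S|}{2}$ edges inside $S$, and at most $|T|$ edges between $T$ and $S$ (each $T$-vertex contributing at most one, by the surviving hypothesis). For $|S|=2, |T|=2$ this gives at most $1 + 1 + 2 = 4$ edges, and for $|S|=3, |T|=1$ at most $3 + 0 + 1 = 4$ edges; in both cases strictly fewer than the five edges a diamond requires, a contradiction. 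Since every case is impossible, $G^*$ has no induced diamond, completing the proof. The only mild subtlety to watch is the bookkeeping that the at-most-one-$U$-neighbor condition genuinely transfers to $G^*$ and applies to $S$, which the preliminary observation secures.
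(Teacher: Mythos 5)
Your proposal is correct and follows essentially the same route as the paper: assume a diamond $D$ in $G\triangle E_\pm$, observe that all modified pairs lie inside $U$ so the two hypotheses rule out $|D\cap U|\le 1$ and $|D\cap U|=4$, and kill the cases $|D\cap U|\in\{2,3\}$ via the at-most-one-neighbor condition (your explicit five-edge count is just a slightly more detailed rendering of the paper's observation that some vertex of $D\setminus U$ would need two $G$-neighbors in $D\cap U$). No gaps.
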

\begin{proof}
  Suppose for contradiction that $G^* = G \triangle E_\pm$ contains a diamond; let $D$ be a set of vertices inducing a diamond in $G^*$.  Since $G[D]$ is not a diamond, at least one (non-)edge of this diamond belongs to $E_\pm$, and is between vertices of $U$.  On the other hand, $G[U] \triangle E_\pm$ remains diamond-free, hence $D\not\subseteq U$.  Therefore, $|D\cap U|$ is either two or three, but then a vertex in $D\setminus U$ is adjacent to at least two vertices of $D \cap U$ in $G$, a contradiction.
\end{proof}

%In particular, if $U$ is a maximal clique of type \textsc{ii}  in a diamond free graph $G$, and $G[U] - E_-$ is also diamond-free for a set $E_-$ of edges in $G[U]$, then $G - E_-$ is diamond-free. 

\begin{lemma}\label{lem:addition}
    Let $E_\pm$ be a minimum solution to a reduced yes-instance $(G,k)$. 
    Every vertex incident to some edge in $E_+$ is contained in some small maximal clique of type \textsc{i} in $G$. 
\end{lemma}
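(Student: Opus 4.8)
The plan is to prove the statement for a single added edge $uv\in E_+$ and both of its ends; by the symmetry between $u$ and $v$ it suffices to locate a small maximal clique of type \textsc{i} of $G$ that contains $u$. Write $G^*=G\triangle E_\pm$, which is diamond-free. First I would exploit minimality to produce a witness. Since $E_\pm$ is a minimum solution, $E_\pm\setminus\{uv\}$ is not a solution, so $G^*-uv$ contains an induced diamond $D$. As $G^*$ is diamond-free and $G^*-uv$ differs from it only in the pair $uv$, the pair $uv$ must be the \emph{missing edge} of $D$; hence $D=\{u,v,a,b\}$ with cross edge $ab$, and $\{u,v,a,b\}$ induces a $K_4$ in $G^*$. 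Let $K^*$ be the maximal clique of $G^*$ containing this $K_4$, which is unique because $G^*$ is diamond-free; thus $u,v,a,b\in K^*$ and $|K^*|\ge 4$.

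Next I would record the clean structure of $K^*$. Because $K^*$ is a clique of $G^*$, no edge of $G$ with both ends in $K^*$ is deleted by the solution, so $G[K^*]$ is exactly $K^*$ with the added edges lying inside it removed, i.e. a complete graph minus at most $k$ edges, and $uv$ is one of these missing (added) edges. Moreover $K^*$ is \emph{loosely attached} in $G^*$: if a vertex outside $K^*$ had two neighbors $p,q\in K^*$ in $G^*$, then that vertex together with $p,q$ would lie in a maximal clique of $G^*$ distinct from $K^*$ but sharing $\{p,q\}$ with it, contradicting that two maximal cliques of a diamond-free graph share at most one vertex. Hence every vertex outside $K^*$ has at most one neighbor in $K^*$, which is precisely the hypothesis of Proposition~\ref{lem:replacement} applied to the diamond-free graph $G^*$ with $U=K^*$.

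The crux is to upgrade the $G^*$-diamond to a genuine diamond of $G$ with the \emph{same} missing edge $uv$, that is, to find $a',b'\in N_G(u)\cap N_G(v)$ with $a'b'\in E(G)$. Here I would push minimality through Proposition~\ref{lem:replacement}: any editing of the clique $G^*[K^*]$ into a diamond-free graph extends, with the same number of changes, to a global solution of $(G,k)$; since the current solution completes $G[K^*]$ into $K^*$ and $E_\pm$ is minimum, completing $G[K^*]$ must be a \emph{minimum} diamond-free editing of $G[K^*]$. In particular $G[K^*]$ is not diamond-free (otherwise the empty editing would beat the completion), so it contains an induced diamond, and I would argue that its missing edge can be taken to be $uv$: if instead $N_G(u)\cap N_G(v)$ were independent in $G$, then no induced diamond of $G[K^*]$ would have $uv$ as missing edge, and one could repair $G[K^*]$ by a strictly cheaper editing that deletes a few cross edges, beating the completion and contradicting minimality. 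This last combinatorial step---that a clique-completion can be minimum only if \emph{every} missing edge is the missing edge of some induced diamond of $G[K^*]$---is the hard part, and the place where the density forced by minimality must be converted into the required diamond; everything around it is routine.

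Granting the $G$-diamond $\{u,v,a',b'\}$ with missing edge $uv$ and $a',b'\in K^*$, the conclusion follows cleanly. Let $K_u$ and $K_v$ be maximal cliques of $G$ containing $\{u,a',b'\}$ and $\{v,a',b'\}$ respectively. Since $uv\notin E(G)$, we have $v\notin K_u$ and $u\notin K_v$, so $K_u\ne K_v$ while they share the two vertices $a',b'$; hence both are of type \textsc{i} by definition. It remains to see they are \emph{small}. If $K_u$ were big, then by Lemma~\ref{lem:bigT1}(\ref{item:bigT1:max-clique}) it would remain a maximal clique of $G^*$ on the same vertex set; but $a',b'\in K_u\cap K^*$ are two shared vertices, while $v\in K^*\setminus K_u$ shows $K_u\ne K^*$, contradicting that two maximal cliques of the diamond-free graph $G^*$ share at most one vertex. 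Therefore $K_u$, and symmetrically $K_v$, is a small maximal clique of type \textsc{i}, so both ends of $uv$ lie in such a clique, as required.
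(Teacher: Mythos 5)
Your overall architecture is the same as the paper's: you localize to the maximal clique $K^*$ of $G^*$ containing $u$ and $v$, observe that no edge of $G$ inside $K^*$ is deleted and that $K^*$ is ``loosely attached,'' invoke Proposition~\ref{lem:replacement} together with minimality to conclude that completing $G[K^*]$ to a clique must be a \emph{minimum} diamond-free editing of $G[K^*]$, and then derive smallness of the resulting type-\textsc{i} cliques from Lemma~\ref{lem:bigT1}(\ref{item:bigT1:max-clique}) and the fact that two maximal cliques of a diamond-free graph share at most one vertex. All of those steps are correct, and your closing argument (both $K_u$ and $K_v$ would, if big, survive into $G^*$ and share $\{a',b'\}$ with the distinct maximal clique $K^*$) is clean.

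However, there is a genuine gap exactly where you flag it. The claim that ``a clique-completion can be minimum only if every missing edge is the missing edge of some induced diamond of $G[K^*]$'' is the entire content of the lemma, and your justification --- ``one could repair $G[K^*]$ by a strictly cheaper editing that deletes a few cross edges'' --- is not an argument: which edges, and why is the count strictly smaller? This is precisely where the paper spends its effort, via an explicit two-case charging argument: assuming the endpoint $v$ lies in \emph{no} diamond of $G[U]$, it partitions $U\setminus\{v\}$ into the cliques $A_1,\dots,A_p$ of $G[N_G(v)\cap U]$, the singletons $B$, and the non-neighbors $C\ni u$, and then compares $|E_+\cap U^2|$ (which is at least ${|B|\choose 2}+|C|$, resp.\ at least $|E'_+|+|C|+|C|\cdot p$) against an explicitly constructed cheaper local editing in each of the cases $|B|\ge|C|$ and $|B|<|C|$. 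Note also that you are aiming at a strictly stronger intermediate statement than the paper needs: the paper only shows that each endpoint lies in \emph{some} diamond of $G[U]$ (any such diamond already yields a type-\textsc{i} maximal clique $K$ of $G$ with $|K\cap U|\ge 3$, hence small), whereas you require a diamond of $G$ whose missing edge is exactly $uv$. Your stronger claim may well be true, but it is not established by anything you wrote, and without it (or the paper's weaker version plus its counting argument) the proof does not go through.
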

\begin{proof}
  Let $G^* = G \triangle E_\pm$, where $uv$ is an edge in $E_+$, and let $U$ be a maximal clique of $G^*$ containing $u, v$.  We argue first that $v$ is in some induced diamond in $G[U]$.

  Suppose for contradiction that $v$ participates in no diamond in $G[U]$.  
  Let $X = N_G(v)\cap U$.  The subgraph $G[X]$ is a disjoint union of cliques: An induced path of length two would make a diamond with $v$.  Let $\{A_1, \dots, A_p\}$ be those nontrivial cliques (containing more than one vertex) in $G[X]$; let $B$ be the other vertices of $X$; and let $C = U\setminus N_G[v]$.  Then  $\{A_1, \dots, A_p, B, C\}$ is a partition of the set $U \setminus \{v\}$.  Note that $p$ or $|B|$ may be $0$, but $|C| > 0$ because $u\in C$.
  To arrive at a contradiction, we will construct a solution $E'_\pm$ for $G[U]$ whose size is smaller than the number of non-edges in $G[U]$. Assume such an $E'_\pm$ exists and let $G'$ be the graph obtained from $G^*$ by replacing $G^*[U]$ with $G[U] \triangle E'_\pm$.  Since $U$ is a type-\textsc{ii} maximal  clique of $G^*$, for each $x \in V(G)\setminus U$ we have $|N_{G^*}(x)\cap U| \le 1$.  By Proposition~\ref{lem:replacement}, $G'$ is diamond-free.  This would however imply a strictly smaller solution than $E_\pm$, contradicting that $E_\pm$ is a minimum solution of $(G, k)$. Now we show how to construct $E'_\pm$.
	
  {Case 1, $|B| \ge |C|$.}   We set $E'_+ = \emptyset$ and $E'_-$ the set of edges in $G[C]$.  No edge in $E'_-$ is incident to $v$ or $N(v)$, and hence $N(v)\cap U$ is still a disjoint union of cliques in $G'$.  On the other hand, no vertex $x\in C$ is in any diamond in $G'[U]$ because $N_{G'}(x)\cap U$ is an independent set. Thus, $G'[U]$ is diamond-free.  Since $B$ is an independent set of $G$, and $v$ is nonadjacent to $C$, we have
  \[
    |E_+\cap U^2| \ge {|B|\choose 2} + |C| \ge {|C|\choose 2} + |C| > |E'_-| = |E'_+\cup E'_-|.
  \]
  
  {Case 2, $|B| < |C|$.}   We set $E'_+$ to be the set of non-edges in $G[B \cup C]$, and $E'_-$ the set of edges between $B \cup C$ and $U\setminus \left(B \cup C\right)$.  To see that $G'[U]$ is diamond-free, note that its maximal cliques are $B \cup C$ and $\{v\} \cup A_i$ for $1 \le i \le p$, whose intersection is either $\{v\}$ or empty.
We then calculate the cardinality of $E_+\cap U^2$, which comprises three parts, those among $B\cup C$, which is exactly $E'_+$, those between $C$ and $v$, and those between $C$ and $A_i$'s.   Since $v$ does not belong to any diamond in $G[U]$, each vertex in $C$ is adjacent to at most one vertex of $A_i, 1\le i\le p$.  In other words, for each $x\in C$ and each $1\le i\le p$, the number of non-edges between $x$ and $A_i$ is at least one.   Therefore
  \[
    |E_+\cap U^2|\ge |E'_+| + |C| + |C| \times p > |E'_+| + |B| + |C| \times p \ge |E'_+| + |E'_-|. 
  \]
	
  Now that $v$ is in some induced diamond in $G[U]$, we can find a maximal clique $K$ of $G$ containing three of its vertices including $v$.  Since $K\ne U$ and $|K\cap U|\ge 3$, it cannot induce a maximal clique of $G^*$. Hence by Lemma~\ref{lem:bigT1}(\ref{item:bigT1:max-clique}), it is small.  This concludes the proof of the lemma.
\end{proof}

After delimiting the ends of the edges added by a minimum solution, we then turn to the ends of those edges {deleted} by a minimum solution. The next lemma states that some maximal cliques in $G$ remain maximal cliques after applying the solution, that is, none of the edges inside those cliques are deleted. 

%As demonstrated in Figure~\ref{fig:example}, an edge may be deleted from a maximal clique of type \textsc{ii}.  In this example, neither end of the deleted edge $v_0 v_1$ is in any maximal clique of type \textsc{i}.  This can only happen \textit{after} some modification happens in the neighborhood of this vertex---$u_2 v_2$ added in the example.  Indeed, we may consider the added/deleted edges stepwise, then there is an order such that each edge is added/deleted \textit{only if} it is in some diamond.  One modification may introduce new diamond(s) not in the original graph.  For example, neither $v_0 v_1$ nor $u_1v_2$ is in a diamond of $G$, but the addition of $u_2 v_2$ jeopardizes $u_1v_2$, whose deletion consequently brings $v_0 v_1$ down.
%
%This example is actually exemplary: The only way to bring an edge in a maximal clique $K$ of type \textsc{ii} to a diamond is through adding an edge between $K$ and a neighbor of $K$.  According to Proposition~\ref{pro:untouchable-edge}, however, this would not happen \todo{S:how \ref{pro:untouchable-edge} implies this?} when $|K| \ge k + 3$.  In other words, to make sure a large clique in \two{G} is immutable to future modifications, it suffices to keep $k + 3$ of its vertices. 

\begin{lemma}\label{lem:type-2}
  Let $E_\pm$ be a minimum solution to an instance $(G, k)$, and let $K$ be a maximal clique of type \textsc{ii} in $G$.  %If no edge in $E_+$ is incident to a vertex in $K$, then $K$ remains a maximal clique of type \textsc{ii} in the resulting graph.
  If  $E_+$ contains neither (i) an edge between $u \in K$ and $v \in N(K)$, nor (ii) two edges between vertices of $K$ and the same vertex in $V(G)\setminus K$, then $K$ remains a maximal clique (of type \textsc{ii}) in $G\triangle E_\pm$.
\end{lemma}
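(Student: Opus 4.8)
The plan is to track two things through the solution $E_\pm$: the adjacencies between $K$ and the outside, and the edges internal to $K$. Write $G^* = G \triangle E_\pm$, which is diamond-free since $E_\pm$ is a solution. I would first record a structural consequence of $K$ being of type \textsc{ii} in $G$: every vertex $w \in V(G) \setminus K$ has at most one neighbor in $K$. Indeed, if $w$ were adjacent to two vertices $x, y \in K$, then $\{x,y,w\}$ is a triangle that extends to some maximal clique $K''$; since $w \notin K$ we have $K'' \ne K$, yet $\{x,y\} \subseteq K'' \cap K$, contradicting that $K$ is of type \textsc{ii}. The whole argument then hinges on two claims: (A) no edge inside $K$ is deleted, i.e.\ $E_- \cap E(G[K]) = \emptyset$; and (B) in $G^*$ as well, every vertex outside $K$ has at most one neighbor in $K$. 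Granting (A) and (B), $K$ stays a clique in $G^*$ by (A); it cannot be extended by an outside vertex, which by (B) misses at least one vertex of $K$ whenever $|K| \ge 2$; and it cannot share an edge $xy$ with another maximal clique $K'$, since a vertex of $K' \setminus K$ would be adjacent to both $x$ and $y$, contradicting (B). Hence $K$ remains a maximal clique of type \textsc{ii}.

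For (B) I would argue by contradiction, using hypotheses (i) and (ii) directly. Suppose some $w \in V(G) \setminus K$ is adjacent in $G^*$ to two distinct vertices $a, b \in K$. By the structural observation, $w$ has at most one neighbor in $K$ in $G$. If $w$ has none, then both $wa$ and $wb$ lie in $E_+$, i.e.\ two added edges joining $w$ to $K$, which hypothesis (ii) forbids. If $w$ has exactly one neighbor $u_0 \in K$, then $w \in N(K)$ and at least one of $a, b$, say $b$, differs from $u_0$; then $wb \in E_+$ is an edge between $b \in K$ and $w \in N(K)$, which hypothesis (i) forbids. Either way we reach a contradiction, proving (B).

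Claim (A) is where minimality enters, and I expect it to be the main obstacle. Suppose $F := E_- \cap E(G[K])$ is nonempty. The key move is to invoke Proposition~\ref{lem:replacement} with the diamond-free graph $G^*$ in the role of the base graph and $U = K$: claim (B) supplies exactly the required hypothesis that every vertex outside $K$ has at most one neighbor in $K$ in $G^*$. Taking $E''_+ = F$ and $E''_- = \emptyset$ (note that $E_+$ contains no edge inside the clique $K$, so $F$ records all of the internal editing), the graph $G^*[K] + F$ is the complete graph on $K$ and hence diamond-free, so Proposition~\ref{lem:replacement} yields that $G^* + F$ is diamond-free. But $G^* + F = G \triangle E'_\pm$ with $E'_+ = E_+$ and $E'_- = E_- \setminus F$, a solution of size $|E_\pm| - |F| < |E_\pm|$, contradicting the minimality of $E_\pm$. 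Hence $F = \emptyset$ and (A) holds.

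Finally, the degenerate case $|K| = 1$ (an isolated vertex, vacuously of type \textsc{ii}) falls outside the maximality step above, but is disposed of directly: deleting from $E_+$ all edges incident to such a vertex yields a no-larger solution whose graph is still diamond-free, since making that vertex isolated cannot create a diamond through it and leaves all other adjacencies untouched; so by minimality no edge is added there and $K$ remains a maximal clique of type \textsc{ii}. Assembling (A), (B), and this remark completes the proof.
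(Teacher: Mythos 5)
Your proposal is correct and follows essentially the same route as the paper's proof: the same two ingredients (outside vertices retain at most one neighbor in $K$ thanks to hypotheses (i) and (ii), and no internal edge of $K$ is deleted by minimality via Proposition~\ref{lem:replacement} applied to $G^*$ with $U=K$). You simply spell out the details more explicitly, including the degenerate single-vertex case that the paper leaves implicit.
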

\begin{proof}
  Let $G^* = G \triangle E_\pm$.	Since $K$ is a type-\textsc{ii} maximal clique of $G$, each vertex $v\in V(G) \setminus K$ has at most one neighbor in $K$.  By the assumption that $E_+$ contains neither (i) nor (ii), this remains true in $G + E_+$ and $G^*$.  
  On the other hand, $E_-$ cannot contain edges of $G[K]$; otherwise, by Proposition~\ref{lem:replacement}, $G^*$ remains diamond-free after replacing $G^*[K]$ by $G[K]$, which implies a strictly smaller solution than $E_\pm$.  Therefore, $K$ is a maximal clique in $G^*$.
\end{proof}

The next corollary follows from Lemma~\ref{lem:addition} and Lemma~\ref{lem:type-2}.
\begin{corollary}\label{lem:deletion}
  Let $E_\pm$ be a minimum solution to a reduced yes-instance $(G, k)$, and let $K$ be a maximal clique of $G$ containing both ends of an edge in $E_-$.  Then either $K \in \sone{G}$, or $K \in \two{G}$ and $K$ intersects one clique in \sone{G}.
\end{corollary}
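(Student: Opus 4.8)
The plan is to combine Lemmas~\ref{lem:addition} and~\ref{lem:type-2} after first ruling out that $K$ is big. Let $ab \in E_-$ be an edge with $a, b \in K$. Every vertex of $K$ other than $a$ and $b$ lies in $N(a) \cap N(b)$, and these vertices are pairwise adjacent, being inside the clique $K$. Thus if $K$ were big, i.e.\ $|K| \ge 3k + 2$, we would exhibit $|K| - 2 \ge k + 1$ pairwise adjacent vertices in $N(a) \cap N(b)$, and Proposition~\ref{pro:untouchable-edge}(i) would forbid $ab \in E_-$. Hence $K$ is small. (When $k = 0$ there is no deleted edge and the statement is vacuous, so we may assume $k \ge 1$.)

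It remains to pin down the type of $K$. If $K$ is of type \textsc{i}, then, being small, it belongs to $\sone{G}$ and we are in the first case. So suppose $K \in \two{G}$; I would then show that $K$ intersects a clique of $\sone{G}$. The key is to read Lemma~\ref{lem:type-2} contrapositively. Since $a, b \in K$ and $ab \in E_-$, the set $E_-$ contains an edge of $G[K]$; but the proof of Lemma~\ref{lem:type-2} shows this cannot happen once $E_+$ avoids both (i) an edge between $K$ and $N(K)$ and (ii) two edges from $K$ to a common vertex outside $K$. Consequently $E_+$ realizes one of these two configurations, and in either one some vertex $u \in K$ is an endpoint of an edge of $E_+$.

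Finally I would apply Lemma~\ref{lem:addition}, which puts every endpoint of an added edge inside a small type-\textsc{i} clique; thus $u$ lies in some $K' \in \sone{G}$. Since $u \in K \cap K'$ while $K \in \two{G}$ is distinct from the type-\textsc{i} clique $K'$, the clique $K$ intersects $K' \in \sone{G}$, settling the second case. The only point requiring care is extracting, from the failure of the hypotheses of Lemma~\ref{lem:type-2}, an endpoint of an added edge that actually lies in $K$; both configurations (i) and (ii) supply such an endpoint directly, so I expect no genuine obstacle beyond this bookkeeping.
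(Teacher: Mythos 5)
Your proposal is correct and is essentially the derivation the paper intends (the paper states only that the corollary ``follows from Lemma~\ref{lem:addition} and Lemma~\ref{lem:type-2}''): you rule out $K\in\bone{G}$ via Proposition~\ref{pro:untouchable-edge}(i), and in the type-\textsc{ii} case you read Lemma~\ref{lem:type-2} contrapositively to extract an endpoint of an added edge inside $K$, which Lemma~\ref{lem:addition} places in a clique of $\sone{G}$. No gaps.
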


\begin{figure}[h]
\centering
\begin{tikzpicture}[scale=2]\small
  \begin{scope}[every node/.style={myv}, every path/.style={thin}]
    \graph[simple] { 
      subgraph K_n [n=14, empty nodes, nodes={rotate=-540/14}, clockwise,radius=2.5cm] ;
%      3 -!- 4;
    };
  \end{scope}
  \begin{scope}[every node/.style={myv}]
    % \node ["$v_9$" right, right = 10mm of 4, yshift=3mm] (v9) {};
    \node ["$v_7$" right, right = 10mm of 3, yshift=3mm] (v7) {};
    \node ["$v_8$" right, right = 10mm of 3, yshift=-3mm] (v8) {};
    \node ["$v_9$" right, below = 4mm of v8] (v9) {};    
\draw (3) edge (v7);
\draw (3) edge (v8);
\draw (v7) edge (v8);
\draw (3) edge (v9);
\draw (4) edge (v9);

\node ["$v_1$", above left = 6mm of 10] (v1) {};
\node ["$v_0$", left  = 5mm of v1] (v0) {};
\node ["$v_2$", left = 14mm of 10] (v2) {};
\draw (10) -- (v2) -- (v0) -- (10) -- (v1) -- (v2) (v0) -- (v1);

\node ["$v_3$" left, below = 8mm of v2] (v3) {};
\node ["$v_4$" above right, right = 5mm of v3, yshift=1mm] (v4) {};
\node ["$v_6$" below, below = 6mm of v4, xshift=2mm] (v6) {};
\node ["$v_5$" below, left = 9mm of v6] (v5) {};
\draw (v3) -- (v4) -- (v5) -- (v6) -- (v3) -- (v5) (v4) -- (v6);
\foreach \i in {3,...,6}{
  \draw (v\i) edge (9);
  \draw (v\i) edge (v2);
}
\end{scope}
\begin{scope}[blue]
  % \foreach \x [count=\idx from 0] in {1,...,12} {
  %   \pgfmathparse{75 - \idx * (360 / 12)}
  %   \node at (\pgfmathresult:2.2cm) {\x};
  % };
  \node["$u_1$" below, xshift=-1.3mm, at={(10)}] {};
  \node["$u_2$" below, at={(9)}] {};    
  \node["$u_3$" below, xshift=1.3mm, at={(3)}] {};
  \node["$u_4$" below, xshift=1mm, at={(4)}] {};    
\end{scope}
\end{tikzpicture}
\caption{An example with $k = 4$, of which a minimum solution is $\{+u_2 v_2, -u_1 v_2, -v_0 v_1, -u_3 v_{9}\}$.  (Note that $u_1 v_2$ and $v_0 v_1$ are not in any diamond of $G$.)  It has six maximal cliques, $K_1 = \{v_0, v_1, v_2, u_1\}$, $K_2 = \{v_2, v_3, v_4, v_5, v_6\}$, $K_3 = \{u_2, v_3, v_4, v_5, v_6\}$, $K_4 = \{u_3, v_7, v_8\}$, $K_5 = \{u_3, u_4, v_{9}\}$, while $K_6$ comprises of $u_1, u_2, u_3, u_4$ and other ten unlabeled vertices.  Four of these maximal cliques, $K_2$, $K_3$, $K_5$, and $K_6$, are of type \textsc{i}, of which only $K_6$ is big, the other two of type \textsc{ii}\@.  All 14 labeled vertices are vulnerable, and the other 8 unlabeled vertices are guarded. }
\label{fig:example}
\end{figure}
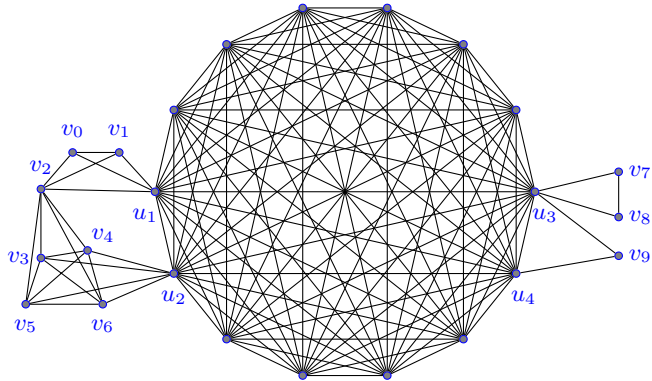

Lemma~\ref{lem:addition} and Corollary~\ref{lem:deletion} motivate the following definition.  We say that a vertex $v$ is {\em vulnerable} in graph $G$ if (1) there exists some $K\in \sone{G}$ containing $v$; or (2) there are intersecting maximal cliques $K_1\in \sone{G}$ and $K_2\in \two{G}$ such that $v\in K_2$.  A vertex is \emph{guarded} if it is  not vulnerable.  Lemma~\ref{lem:addition} and Corollary~\ref{lem:deletion} can be summarized as: No (non-)edge in a minimum solution can be incident to a guarded vertex.  See Figure~\ref{fig:example} for an illustration.

\section{The kernel}%{Type-I maximal cliques}
\label{sec:kernel}

We partition the vertex set of a reduced graph into five parts, and deal with them separately.
\begin{enumerate}[(i)]
\item\label{item:1} vertices in small maximal cliques of type \textsc{i} (all of them are vulnerable); 
\item\label{item:2} vulnerable vertices in big maximal cliques of type \textsc{i} but not in the previous part;
\item\label{item:3} other vulnerable vertices (not in any maximal cliques of type \textsc{i});
\item\label{item:4} guarded vertices in (big) maximal cliques of type \textsc{i}; and
\item\label{item:5} other guarded vertices (not in any maximal cliques of type \textsc{i}).
\end{enumerate}
Note that for this purpose we do \textit{not} need to enumerate the maximal cliques.  The key observation is that we can easily find the cross edges of all diamonds by enumeration, from which we can identify all vertices and edges in maximal cliques of type \textsc{i}.  We use the procedure \texttt{partition} presented in Figure~\ref{fig:alg-partition}, which computes this partition in three steps: It first finds all vertices in a maximal clique of type \textsc{i}, from which it identifies those in a small maximal clique of type \textsc{i}, and finally it uses them to get all vulnerable vertices.

%We are able to easily partition  the vertex set of a reduced graph into five parts.  Note that for this purpose we do \textit{not} need to enumerate the maximal cliques.  The key observation is that we can easily find all cross edges by enumeration, from which we can identify all vertices and edges in maximal cliques of type \textsc{i}.

\begin{figure}[h!]
  \centering
  \tikz\path (0,0) node[draw, text width=.9\textwidth, rectangle, rounded corners, inner xsep=20pt, inner ysep=10pt]{
    \begin{minipage}[t!]{\textwidth} \small
%      Procedure \texttt{parittion}      \\
      {\sc Input}: a reduced instance ($G, k$).
      \\
      {\sc Output}: vertices in the five parts have (i) mark ``small,'' (ii) marks ``vulnerable'' and ``type \textsc{i},''  (iii) mark ``vulnerable,'' (iv) mark ``type \textsc{i},'' and (v) no mark, respectively.
			
      \begin{tabbing}
        Aa\=Aaa\=aaa\=Aaa\=MMMMMMAAAAAAAAAAAAA\=A \kill
        1.\> {\bf for} each edge $u v\in E(G)$ where $N(u)\cap N(v)$ does not induce a clique {\bf do}
        \\
        1.1.\>\> mark $u v$ ``cross edge'';
        \\
        1.2.\>\> mark $u, v$ and all vertices in $N(u) \cap N(v)$ as ``type \textsc{i}'';
        \\
        1.3.\>\> mark all edges between these vertices as ``type \textsc{i}'';
        \\
        \> {\textbackslash\!\!\textbackslash {\em a vertex is in a maximal clique of type \textsc{i} if and only if it's marked ``type \textsc{i}.''}}
        \\
        2.\> {\bf for} each marked vertex $v$ {\bf do}
        \\
        2.1.\>\> {\bf if} $N(v)$ does not induce a cluster (a disjoint union of cliques) {\bf do} mark $v$ as ``small'';
        \\
        2.2.\>\> {\bf else if} a clique in $N(v)$ of size $\le 3 k$ contains a cross edge {\bf do} mark $v$ as ``small'';
        \\
        3.\> {\bf for} each unmarked edge $u v\in E(G)$ {\bf do}
        \\
        3.1.\>\> find the maximal clique $K$ containing $u$ and $v$;
        \\
        3.2.\>\> {\bf if} $K$ contains any vertex marked ``small'' {\bf then} mark all vertices in $K$ ``vulnerable'';
        \\    
        3.3.\>\> mark every edge in $K$ ``checked.''
        % \\
        % 4.\> {\bf return} vertices with mark(s) ``$S(G)$,'' ``vulnerable'' and ``type \textsc{i},''  ``vulnerable'' only, ``type \textsc{i}'' only, no mark, as the five sets.
      \end{tabbing}
		\end{minipage}
	};
	\caption{The procedure  \texttt{partition}.}%{At the end of the algorithm, $S(G)$ are precisely vertices with mark(s) ``$S(G)$,'' ``vulnerable'' and ``type \textsc{i},''  ``vulnerable'' only, ``type \textsc{i}'' only, no mark, as the five sets.}
	\label{fig:alg-partition}
\end{figure}

It is easy to check that procedure \texttt{partition} runs in polynomial time.  We now show its correctness.

\begin{lemma}\label{lem:partition}
  Procedure {partition} is correct.
\end{lemma}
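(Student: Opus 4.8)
The plan is to check, one block at a time, that procedure \texttt{partition} places exactly the marks its specification claims, and then to read the five parts off the combined marks. The two facts I would set up first are: (a) by Proposition~\ref{obs:diamond-vertices} a vertex lies in a maximal clique of type~\textsc{i} exactly when it lies in an induced diamond; and (b) an edge $ab$ lies in two distinct maximal cliques iff $N(a)\cap N(b)$ does not induce a clique (call such an $ab$ a \emph{cross edge}), equivalently $ab$ lies in a unique maximal clique iff $N(a)\cap N(b)$ is a clique. From (b) I get a fact used throughout: any edge contained in a maximal clique of type~\textsc{ii} lies in \emph{no} other maximal clique, since two maximal cliques sharing an edge share at least two vertices and are hence both of type~\textsc{i}.

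First I would settle Step~1. The edges it processes are exactly the cross edges, and a cross edge $uv$ together with a non-adjacent pair $x,y\in N(u)\cap N(v)$ induces a diamond. For the vertex marks I must show that the ``type~\textsc{i}'' vertices are exactly the diamond vertices. Every diamond vertex is caught, as all four vertices of a diamond lie in $\{u,v\}\cup(N(u)\cap N(v))$ for its cross edge. Conversely, for a marked $w\in\{u,v\}\cup(N(u)\cap N(v))$ I would produce a diamond through $w$ by a short case analysis on the fixed non-adjacent pair $x,y$: if $w\in\{u,v\}$, or $w$ is non-adjacent to one of $x,y$, then $w$ lies in a diamond built on $u,v$ and a suitable pair of common neighbours; and if $w$ is adjacent to both $x$ and $y$, then $u,w,x,y$ induces a diamond with cross edge $uw$. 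An analogous argument handles the edge marks: if $ab$ is marked then its endpoints span an edge of some diamond, hence of a type-\textsc{i} clique; conversely, if $ab$ lies in a type-\textsc{i} clique $K$ sharing two vertices $c,d$ with another maximal clique $K'$, then picking $p\in K\setminus K'$ and a non-neighbour $q\in K'\setminus K$ of $p$ shows $cd$ is a cross edge whose processing marks every edge of $K$, in particular $ab$. Thus the edges left \emph{unmarked} after Step~1 are precisely those inside a (unique) maximal clique of type~\textsc{ii}.

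Step~2 is where the reduced structure enters, and I expect it to be the main obstacle. The claim is that $v$ is marked ``small'' iff $v$ lies in some clique of \sone{G}. If $N(v)$ is not a cluster (Step~2.1), an induced $P_3$ in $N(v)$ with middle vertex $b$ yields a cross edge $vb$ that therefore lies in at least two maximal cliques, all of type~\textsc{i} and all containing the two vertices $v,b$; by Lemma~\ref{lem:bigT1}(\ref{item:bigT1:no-sharing}) at most one of them is big, so at least one is small and contains $v$. If $N(v)$ \emph{is} a cluster with components $Q_1,\dots,Q_m$ (Step~2.2), the maximal cliques through $v$ are exactly the $\{v\}\cup Q_i$; here the key point is that no edge $va$ can be a cross edge, since $N(v)\cap N(a)=Q_i\setminus\{a\}$ is a clique, so every cross edge inside $\{v\}\cup Q_i$ lies inside $Q_i$. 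Hence $\{v\}\cup Q_i$ is of type~\textsc{i} iff $Q_i$ contains a cross edge, and it is small iff $|Q_i|\le 3k$, which is exactly the test of Step~2.2. In both branches the mark ``small'' is placed correctly.

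Finally, Step~3 ranges over the unmarked edges, i.e.\ over the edges of the maximal cliques of type~\textsc{ii}, recovering each such $K$ once via the ``checked'' bookkeeping. By definition a vertex is vulnerable through condition~(2) exactly when it lies in a type-\textsc{ii} clique meeting a clique of \sone{G}; since a vertex carries the mark ``small'' iff it lies in a clique of \sone{G} (Step~2), this holds iff $K$ contains a ``small''-marked vertex, which is precisely when Step~3 marks all of $K$ ``vulnerable''. Combining the three blocks, a vertex is marked ``type~\textsc{i}'' iff it lies in a maximal clique of type~\textsc{i}, ``small'' iff it lies in a clique of \sone{G}, and ``vulnerable'' iff it is vulnerable; reading these marks with ``small'' taking precedence then separates the five parts exactly as specified, and a part-(i) vertex that also lies in a qualifying type-\textsc{ii} clique may additionally receive ``vulnerable'' but is still correctly assigned to part~(i) by its ``small'' mark. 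This proves the procedure correct.
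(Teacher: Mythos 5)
Your overall plan matches the paper's: characterize cross edges via $N(u)\cap N(v)$ not inducing a clique, show that the Step-1 marks capture exactly the vertices and edges of type-\textsc{i} maximal cliques, justify Step~2.1 by exhibiting two type-\textsc{i} maximal cliques sharing the edge $vb$ and invoking Lemma~\ref{lem:bigT1}(\ref{item:bigT1:no-sharing}), handle Step~2.2 via the cluster structure of $N(v)$, and observe that the unmarked edges are exactly those of type-\textsc{ii} maximal cliques so that Step~3 reduces to a membership test for the ``small'' mark. The paper packages the first part as the single property that a maximal clique is of type \textsc{i} iff it contains both ends of the cross edge of a diamond, and derives Steps~1.2 and~1.3 from it; you route the vertex marks through Proposition~\ref{obs:diamond-vertices} instead, and your Step-2 and Step-3 analyses are, if anything, more detailed than the paper's. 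All of that is fine.

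There is, however, one step that fails as written: in the edge-mark argument you claim that every edge marked in Step~1.3 ``spans an edge of some diamond.'' That is false. Take a $K_4$ on $\{a,b,c,d\}$ plus a vertex $y$ adjacent only to $c$ and $d$: the only cross edge is $cd$, processing it marks the edge $ab$, and $ab$ is indeed an edge of the type-\textsc{i} maximal clique $\{a,b,c,d\}$, yet $ab$ lies in no induced diamond (the only induced diamonds here are on $\{a,c,d,y\}$ and $\{b,c,d,y\}$). The conclusion you actually need---every marked edge lies in some type-\textsc{i} maximal clique---is still true, but it must be reached without the detour through diamonds: if $ab$ is marked while processing the cross edge $uv$, then $\{u,v,a,b\}$ is a clique, any maximal clique extending it contains both ends of the cross edge $uv$, and is therefore of type \textsc{i} by the very characterization you prove for the converse direction (and which the paper isolates as its property $(\natural)$). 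With that one-line repair the proof goes through; without it, the claim that the unmarked edges are precisely the edges of type-\textsc{ii} maximal cliques---on which your Step-3 analysis rests---is not established.
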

\begin{proof}
  We start from proving a simple property:
  \begin{quote}
  A maximal clique $K$ is of type \textsc{i} if and only if it contains\\ both ends of the cross edge of a diamond.\hfill ($\natural$)
  \end{quote}
  
  Consider first the ``only if'' direction.  By definition, there exists another maximal clique $K'$ such that $|K\cap K'|\ge 2$.  We can find two vertices $u, v\in K\cap K'$; while by the maximality of $K$ and $K'$, we can find $x\in K\setminus K'$ and $y\in K'\setminus K$.  These four vertices induce a diamond with cross edge $u v$.  Consider then the ``if'' direction.  Let $u v$ be a cross edge of a diamond such that $u, v\in K$, and let $x,y$ be the other vertices of the diamond.  We can find two maximal cliques containing $u, v, x$ and $u, v, y$ respectively.  They are different and hence at least one is different from $K$.  Therefore, $K$ is of type \textsc{i}\@.
	
  An edge $uv \in E(G)$ is a cross edge if and only if $N(u)\cap N(v)$ does not induce a clique; this justifies step~1.1.  Steps~1.2 and 1.3 follow from property ($\natural$).
	
  Step~2 considers all vertices in maximal cliques of type \textsc{i}.  If some component of $G[N(v)]$ is not a clique, we can find a path $x y z$ of length two.  There are two different maximal cliques containing $v, x, y$ and $v, y, z$ respectively.  Both are of type \textsc{i}, and hence by Lemma~\ref{lem:bigT1}(\ref{item:bigT1:no-sharing}), at least one of them is small.  Step~2.2 also follows from property ($\natural$).  If a vertex $v$ is not marked in step~2, then every maximal clique containing $v$ is either big or of type \textsc{ii}.  Therefore, all vertices in small maximal cliques of type \textsc{i} have been correctly identified in step~2.  %  Otherwise, a maximal clique containing $v$ is precisely a component of $G[N(v)]$ together with $v$.
	
  Step~3 finds other vulnerable vertices.  By definition, such a vertex is in some maximal clique of type \textsc{ii}\@.  If a type-\textsc{ii} maximal clique consists of an isolated vertex, it is guarded and not marked in step~3.  We may hence consider only nontrivial maximal cliques.  All edges in a type-\textsc{ii} maximal clique remain unmarked.  Note that any two vertices of a type-\textsc{ii} maximal clique determines this clique: It is the only maximal clique that contains these two vertices.  Vertices in the clique are vulnerable if and only if it contains a vertex marked ``small.''  We only need to check the clique $K$ once, so we mark them to avoid unnecessary repetition in step~3.3.  After step~3, all type-\textsc{ii} maximal cliques have been checked, and the algorithm is complete.
	%We now analyze the running time.
\end{proof}

% We have presented the procedure in the most straightforward \todo{S:most straightforward is a huge claim} way.  The ideas in \cite{eisenbrand-04-clique} \todo[color=green]{S:ref missed} may be used to have a more efficient implementation, which however we do not explore to avoid blurring the focus of the current paper.
% After obtaining these five parts, we are capable to apply reduction rules to reduce vertices. 

\subsection{Maximal cliques of type {\sc i} }
% This section is concerned with vertices that are contained in maximal cliques of type \textsc{i}\@.
We start from the vertices that are in some small type-\textsc{i} maximal cliques of $G$; let them be denoted by  $S(G)$, i.e., $S(G) = \bigcup\limits_{K\in \sone{G}} K$.  Noting that the final graph has no small type-\textsc{i} maximal cliques, we can bound the size of $S(G)$ by relating vertices in it to edges in a minimum solution. 

\begin{lemma}\label{lem:smallT1-vertex-count}
  If  $(G,k)$ is a reduced yes-instance, then $|S(G)| \le 18 k^3 + 2k$.
\end{lemma}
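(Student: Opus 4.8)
The plan is to split the count into two essentially independent tasks: first bound the number of small maximal cliques of type \textsc{i}, and then multiply by their size. Since a small maximal clique has fewer than $3k+2$ vertices by definition, it suffices to show that $|\sone{G}|$ is $O(k^2)$; keeping careful track of the constants should land us at the claimed $18k^3+2k$, where the additive $2k$ collects the endpoints of the solution (non-)edges that would otherwise be double-counted against the vertices of the cliques. Throughout I would fix an arbitrary minimum solution $E_\pm$ of the yes-instance, so that $|E_\pm|\le k$ and $G^* = G\triangle E_\pm$ is diamond-free; the whole argument is then a charging scheme from the cliques in $\sone{G}$ to the at most $k$ (non-)edges of $E_\pm$.

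The charging itself would proceed as follows. By Proposition~\ref{obs:diamond-vertices} every $K\in\sone{G}$ is built around the cross edge of an induced diamond of $G$, and because $G^*$ is diamond-free this diamond must be destroyed by some (non-)edge of $E_\pm$. I would assign to each $K$ such a responsible edit $\sigma(K)\in E_\pm$, and then bound the number of cliques mapped to a single edit $ab$. This is exactly where Proposition~\ref{pro:edited-neighborhood} enters: since $ab\in E_\pm$, we have $|N(a)\cap N(b)|\le 3k$, so the ends of any diamond witnessing the charge to $ab$ live inside a set of at most $3k$ vertices. Recovering the clique from $ab$ together with one vertex of this bounded neighborhood should cap each preimage $\sigma^{-1}(ab)$ at $O(k)$, giving $|\sone{G}| = O(k^2)$. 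Multiplying by the size bound $3k+1$ and accounting for the $\le 2k$ endpoints of $E_\pm$ separately then yields $18k^3+2k$.

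The hard part is the multiplicity bound, that is, showing a single edit can be responsible for only $O(k)$ cliques of $\sone{G}$. The naive obstacle is that in a reduced instance a single edge may a priori lie in a huge number of maximal cliques: the common neighborhood of an edge is only guaranteed to be ``almost a clique'' (its complement has a small matching, hence a small vertex cover), but that alone still permits many maximal independent sets in the complement, hence many maximal cliques through one cross edge. So one cannot bound the cliques through a cross edge using the reduction rules alone. The resolution must exploit that $E_\pm$ is a genuine solution of size $\le k$: an expensive configuration such as a cross edge sitting in many distinct maximal cliques would force many edits and therefore cannot occur cheaply. I expect to feed in the structural lemmas already established---Lemma~\ref{lem:addition} and Corollary~\ref{lem:deletion} pin the endpoints of added and deleted edges into small type-\textsc{i} cliques, while Lemma~\ref{lem:bigT1}(\ref{item:bigT1:max-clique}) guarantees that big cliques survive untouched---to make $\sigma$ nearly injective after fixing one auxiliary vertex, and to carry out the union bookkeeping that sharpens the crude product $6k^2\cdot(3k+1)$ into $18k^3+2k$. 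Pinning down this multiplicity and the overlap exactly, rather than merely up to the order of magnitude, is the delicate step.
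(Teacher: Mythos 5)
Your high-level ingredients are the right ones---charge small type-\textsc{i} cliques to the solution and use Proposition~\ref{pro:edited-neighborhood} to control common neighborhoods---but there is a genuine gap at exactly the step you flag as delicate, and your chosen decomposition makes it hard to close. First, the plan ``bound $|\sone{G}|$ by $O(k^2)$ and multiply by $3k+1$'' requires a bound on the number of \emph{all} small type-\textsc{i} maximal cliques. The paper never proves such a bound, and it is doubtful: Rule~\ref{rul:edge-sunflower} only forces the complement of $G[N(u)\cap N(v)]$ to have a matching of size at most $k$, which still allows exponentially many maximal cliques through a single cross edge $uv$, and being a yes-instance does not obviously rule this out, since one edit (e.g., deleting $uv$) may simultaneously destroy all the diamonds those cliques witness. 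Second, the multiplicity bound you hope for---recovering $K$ from the responsible edit $ab$ plus one auxiliary vertex, so that each preimage $\sigma^{-1}(ab)$ has size $O(k)$---implicitly requires that a pair of vertices determine the clique, i.e., that the cliques being counted pairwise share at most one vertex. This is false for $\sone{G}$ in general: two small type-\textsc{i} cliques may share two or more vertices (that is precisely what makes them type \textsc{i} with respect to each other), so the structural lemmas you cite will not make $\sigma$ ``nearly injective'' as stated.

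The paper escapes both problems by not counting cliques at all for part of $S(G)$. It sets $X$ to be the endpoints of $E_\pm$ and $Y=\bigcup_{xy\in E_\pm}N_G(x)\cap N_G(y)$, so that $|X\cup Y|\le 3k^2+2k$ by Proposition~\ref{pro:edited-neighborhood}, and counts those vertices directly as vertices. Only the small type-\textsc{i} cliques containing some vertex $v\notin X\cup Y$ are counted; for such $v$ no diamond can live inside $N_G[v]$ (otherwise $v$ would land in $X\cup Y$), which forces these cliques to pairwise intersect in at most one vertex---the injectivity your charging needs---and forces the diamond witnessing $v$'s membership to be destroyed by deleting an edge $xy$ incident to its \emph{other} degree-two vertex. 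Hence each such clique contains one of $x,y$ together with a common neighbor of $x,y$ for some $xy\in E_-$, giving at most $2\cdot 3k\cdot|E_-|\le 6k^2$ cliques, each contributing at most $3k-1$ new vertices. If you want to salvage your scheme, you must first carve out $X\cup Y$ in this way; as written, the argument does not go through (and the crude product $6k^2(3k+1)+2k$ would in any case overshoot the claimed $18k^3+2k$).
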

\begin{proof} 
  Let $E_{\pm}$ be a minimum solution of $(G, k)$.  Let $X = \bigcup_{x y\in E_\pm} \{x, y\}$ and $Y = \bigcup_{x y\in E_\pm} N_G(x)\cap N_G(y)$, i.e., all vertices incident to a (non-)edge in the solution and respectively, all vertices that is a common neighbor of the two ends of a (non-)edge in the solution.  Note that $|X| \le 2 k$, and by Proposition~\ref{pro:edited-neighborhood}, $|Y| \le 3k\cdot |E_{\pm}| \le 3k^2$.
  Since
  \[
    |S(G)\cap (X\cup Y)| \le |X\cup Y|  \le 3k^2 + 2k,
  \]
  it suffices to bound $S(G)\setminus (X\cup Y)$.  A vertex $v\in S(G)\setminus (X\cup Y)$ cannot be contained in two type-{\sc i} maximal cliques of $G$ if they share more than one vertex: Otherwise, there is a diamond (as in Proposition~\ref{obs:diamond-vertices}) in $N_G[v]$, but then $v$ has to be in $X\cup Y$, a contradiction. 

  Let us now consider the set of small type-{\sc i} maximal cliques of $G$ that contain vertices from $S(G)\setminus (X\cup Y)$, which we denote by $\cal K'$.  We argue by contradiction that any pair of cliques in $\cal K'$ shares at most one vertex.  %Suppose otherwise, there are two maximal cliques $K_1, K_2\in \cal K'$  and $u, v\in S(G)\setminus (X\cup Y)$ such that (1) $u\in K_1$ and $v\in K_2$; and (2) $|K_1\cap K_2| \ge 2$.  Then there is a diamond with $u, v$ and two vertices in $K_1\cap K_2$.  But by the assumption $u, v\not\in X\cup Y$, we cannot add $uv$ or delete any edge from $K_1$ or $K_2$, a contradiction.
  Suppose otherwise, there are two maximal cliques $K_1, K_2\in \cal K'$ with $|K_1\cap K_2| \ge 2$.  We have seen that $K_1\cap K_2$ is disjoint from $S(G)\setminus (X\cup Y)$.
  %If $K_1\cap K_2$ contains any vertex $v\in S(G)\setminus (X\cup Y)$, then there is a diamond (as in Proposition~\ref{obs:diamond-vertices}) in $N_G[v]$, but then $v$ has to be in $X\cup Y$, a contradiction.
  Now let $u\in K_1\setminus K_2$ and $v\in K_2\setminus K_1$ be two vertices in $S(G)\setminus (X\cup Y)$.  Then there is a diamond with $u$ and two vertices in $K_1\cap K_2$ and one vertex in $K_2\setminus K_1$.  But by the assumption $u\not\in X\cup Y$, we cannot add or delete any edge incident to $u$; on the other hand, $v\not\in X\cup Y$ forbids the deletion of other three edges, a contradiction.
  
  Let $v\in S(G)\setminus (X\cup Y)$, and let $K$ be a clique in $\cal K'$ containing $v$.  By definition, there exists a diamond in which (1) $v$ is a degree-two vertex; (2) the two degree-three vertices are in $K$; and (3) the other degree-two vertex is not in $K$.  Since $v$ is not in $X \cap Y$, one of the two edges of this diamond that are incident to the other degree-two vertex has to be in $E_-$.  In other words, $K$ contains for some edge $x y\in E_-$, one in $\{x, y\}$ and a common neighbor of $x, y$.  By Proposition~\ref{pro:edited-neighborhood}, for each edge $x y\in E_-$, there are at most $3 k$ vertices in $N_G(x)\cap N_G(y)$;  for each $z\in N_G(x)\cap N_G(y)$, there can be at most one clique in $\cal K'$ containing $x, z$ and  at most one clique in $\cal K'$ containing $y, z$.  Therefore, there can be at most $3 k \cdot 2 \cdot |E_-| \le 6 k^2$ cliques in $\cal K'$.  By definition, each clique in it is small and has at most $3k + 1$ vertices, of which at least two are not in $S(G)\setminus (X\cup Y)$.  Hence
  \[
    |S(G)\setminus (X\cup Y)| \le (3k - 1) \cdot 6 k^2 = 18 k^3 - 6 k^2.
  \]
  Putting the two parts together, we have $|S(G)| \le 18 k^3 + 2k$.
\end{proof}

The next are the big maximal cliques of type \textsc{i}\@. By Lemma~\ref{lem:bigT1}(\ref{item:bigT1:max-clique}), a clique in \bone{G} remains a maximal clique after a minimum solution is applied to $G$.  We bound first the number of big type-{\sc i} maximal cliques. 

\begin{lemma}\label{lem:bigT1-count}
  If $(G,k)$ is a reduced yes-instance, then $|\bone{G}| \le 6 k^2$.
\end{lemma}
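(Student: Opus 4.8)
The plan is to charge each clique in $\bone{G}$ to a deleted edge of a minimum solution, together with one endpoint of that edge and one common neighbor, and then bound the number of such witnesses by $2\cdot 3k\cdot|E_-|\le 6k^2$. Fix a minimum solution $E_\pm$ and set $G^*=G\triangle E_\pm$. Let $K\in\bone{G}$. Since $K$ is of type \textsc{i}, there is another maximal clique $K'$ with $|K\cap K'|\ge 2$; pick $u,v\in K\cap K'$. As $K'\ne K$ and both are maximal, $K'\setminus K\ne\emptyset$, so fix $y\in K'\setminus K$; being in the clique $K'$, this $y$ is adjacent to every vertex of $K'$, in particular to $u$ and $v$. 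Because $|K|\ge 3k+2$, Proposition~\ref{pro:untouchable-edge}(i) guarantees that no edge inside $K$ is deleted, so $uv$ and all edges from $u$ or $v$ to the rest of $K$ survive in $G^*$.

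The heart of the argument is to show that $uy\in E_-$ or $vy\in E_-$. Since $K$ is maximal, $y$ has a non-neighbor in $K$; and if $y$ had at least $2k+2$ neighbors in $K$, those neighbors (pairwise adjacent inside the clique $K$) would, together with any non-neighbor of $y$ in $K$, make Rule~\ref{rul:nonedge-sunflower} applicable, contradicting that the instance is reduced. Hence $y$ has at most $2k+1$ neighbors in $K$, and since $y$ is adjacent to all of $K\cap K'$, the at least $(3k+2)-(2k+1)=k+1$ non-neighbors of $y$ in $K$ all lie in $K\setminus K'$. For each such $x$, the set $\{u,v,x,y\}$ induces a diamond with cross edge $uv$ and missing edge $xy$. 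In $G^*$ the three edges $uv,ux,vx$ survive, so each of these $k+1$ diamonds can be destroyed only by deleting $uy$ or $vy$ or by adding the corresponding $xy$. If neither $uy$ nor $vy$ were deleted, we would be forced to add at least $k+1$ distinct edges $xy$, exceeding the budget $|E_+|\le k$; therefore $uy\in E_-$ or $vy\in E_-$, say $uy\in E_-$.

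This produces a witness for $K$, namely the triple consisting of the deleted edge $uy$, its endpoint $u\in K$, and the common neighbor $v\in N(u)\cap N(y)$ with $v\in K$. The map sending $K$ to this triple is injective: if two distinct cliques of $\bone{G}$ yielded the same triple, both would contain $u$ and $v$, so they would share at least two vertices, contradicting Lemma~\ref{lem:bigT1}(\ref{item:bigT1:no-sharing}). Finally, by Proposition~\ref{pro:edited-neighborhood} each edge in $E_-$ has at most $3k$ common neighbors and of course two endpoints, so the number of admissible triples is at most $2\cdot 3k\cdot|E_-|\le 6k^2$, whence $|\bone{G}|\le 6k^2$.

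I expect the middle step to be the main obstacle: forcing a genuinely incident solution edge for each big type-\textsc{i} clique via the diamond-destruction budget argument requires carefully counting the non-neighbors of $y$ inside $K$ and ruling out alternative repairs (the internal edges of $K$ are immovable by bigness, which is exactly what pins the repair to $uy$, $vy$, or the additions $xy$). Once that is in place, the injective charging is an immediate consequence of the fact that two big maximal cliques share at most one vertex, and the arithmetic $2\cdot 3k\cdot k=6k^2$ follows directly from Proposition~\ref{pro:edited-neighborhood}.
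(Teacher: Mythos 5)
Your proof is correct and follows essentially the same charging scheme as the paper: each big type-\textsc{i} maximal clique is charged to a deleted edge $e\in E_-$, one endpoint of $e$, and one common neighbor of the endpoints, and the count $2\cdot 3k\cdot|E_-|\le 6k^2$ then follows from Proposition~\ref{pro:edited-neighborhood} and Lemma~\ref{lem:bigT1}(\ref{item:bigT1:no-sharing}). Your middle step (the $k+1$ diamonds forcing $uy$ or $vy$ into $E_-$) is a welcome explicit justification of the paper's terser claim that $\bone{G}=\bigcup_{e\in E_-}{\cal K}_e$.
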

\begin{proof}
By Lemma~\ref{lem:bigT1}, the only way to transform a big maximal clique of type \textsc{i} into one of type \textsc{ii} is deleting edges incident to it. For an edge $e = uv \in E_-$, denote by ${\cal K}_e$ the set of big type-\textsc{i} maximal cliques containing one in $\{u, v\}$, and one vertex in $N(u) \cap N(v)$. Note that ${\cal K}_b(G) = \bigcup_{e\in E_-}{\cal K}_e$. By Proposition~\ref{pro:edited-neighborhood}, ${\cal K}_e$ has at most $6k$ maximal cliques. Then $|{\cal K}_b(G)| \le 6k \cdot |E_-| = 6k^2$. 
%(Intuitively, Lemma 12 deals with this case.)
\end{proof}

%\clearpage

To bound the number of vertices in big type-{\sc i} maximal cliques, it suffices to bound their sizes, for which we introduce another reduction rule. 

\begin{redrule}\label{rul:big-t1}
  Let $K\in \bone{G}$ with $|K|\ge 3k + 3$.  If $K$ contains a guarded vertex $x$ that does not occur in any other type-\textsc{i} maximal clique of $G$, delete it.
\end{redrule}

\begin{lemma}\label{lem:rul:big-t1}
  Rule \ref{rul:big-t1} is safe: A reduced instance $(G, k)$ is a yes-instance if and only if $(G - x, k)$ is a yes-instance.
\end{lemma}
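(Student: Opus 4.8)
The plan is to prove both directions, the forward one being immediate from the machinery already built and the reverse one being the heart of the matter.

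For the ``only if'' direction I would take a \emph{minimum} solution $E_\pm$ of $(G,k)$. Since $x$ is guarded, the summary of Lemma~\ref{lem:addition} and Corollary~\ref{lem:deletion}---no (non-)edge of a minimum solution is incident to a guarded vertex---shows that no edit of $E_\pm$ touches $x$. Hence $(G-x)\triangle E_\pm=(G\triangle E_\pm)-x$ is an induced subgraph of the diamond-free graph $G\triangle E_\pm$, so it is diamond-free, and $E_\pm$ witnesses that $(G-x,k)$ is a yes-instance.

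For the ``if'' direction I would fix a minimum solution $E'_\pm$ of $(G-x,k)$ (note that $(G-x,k)$ is again reduced, since deleting a vertex only shrinks common neighborhoods and cannot make a sunflower rule newly applicable) and put $G^*=G\triangle E'_\pm$. As $E'_\pm$ avoids $x$, every diamond of $G^*$ passes through $x$, so it suffices to show that $x$ lies in no diamond of $G^*$. First I would establish that the big clique around $x$ is rigid: writing $K'=K-x$ (of size at least $3k+2$), Lemma~\ref{lem:bigT1}(\ref{item:bigT1:no-sharing}) forbids any vertex from being adjacent to all of $K'$ but not to $x$, so $K'$ is a big maximal clique of $G-x$; by Proposition~\ref{pro:untouchable-edge}(i) none of its internal edges is deleted and by Lemma~\ref{lem:bigT1}(\ref{item:bigT1:max-clique}) it stays maximal, whence $K=K'\cup\{x\}$ is a big maximal clique of $G^*$. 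From the bigness of $K'$ and the diamond-freeness of $G^*-x$ I then get the key locality property~($\star$): in $G^*-x$ every vertex outside $K'$ has at most one neighbor in $K'$, for otherwise two such neighbors together with a third, non-adjacent vertex of $K'$ would induce a diamond.

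Now suppose for contradiction that a diamond $D$ of $G^*$ contains $x$, with other vertices $v_1,v_2,v_3$. Using that $x$ belongs to exactly one type-\textsc{i} clique of $G$ (namely $K$) and Proposition~\ref{obs:diamond-vertices}, I would argue that $D$ cannot already be a diamond of $G$: if $x$ has degree three in $D$ this would place $x$ in two distinct type-\textsc{i} cliques, and if $x$ has degree two then the two degree-three vertices would lie in $K'$ with the remaining vertex adjacent to both, contradicting~($\star$). Hence some (non-)edge of $D$ not incident to $x$ is edited by $E'_\pm$. Tracing this edit through a short case analysis on the location of $v_1,v_2,v_3$ relative to $K'$, property~($\star$) either is violated outright (when the edit pulls a vertex into two neighbors of $K'$) or forces one edited vertex to be a neighbor $w$ of $x$ lying in a type-\textsc{ii} clique $L\ni x$ of $G$; here an \emph{original} adjacency is impossible, since it would give $|K\cap L|\ge 2$ or would make $L$ share two vertices with another clique through $x$, contradicting that $L$ is type-\textsc{ii}. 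For an added edge at $w$, Lemma~\ref{lem:addition} places $w$ in a small type-\textsc{i} clique of $G-x$; for a deleted edge I would instead invoke Corollary~\ref{lem:deletion} on the clique $L-x$ to produce such a vertex.

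The main obstacle, which I would isolate as a standalone claim, is the correspondence between small type-\textsc{i} cliques of $G-x$ and those of $G$: I must show that if a neighbor $w\in L\setminus\{x\}$ of $x$ lies in a small type-\textsc{i} clique $S$ of $G-x$, then $x$ is vulnerable in $G$. The delicate point is that deleting $x$ can change which cliques are maximal, small, or type-\textsc{i}. I would resolve it by extending $S$ to a maximal clique $\hat S$ of $G$ and distinguishing $\hat S=S$ from $\hat S=S\cup\{x\}$: in the first case $\hat S$ is a small type-\textsc{i} clique of $G$ meeting $L$ at $w$, so vulnerability condition~(2) applies to $x\in L$; in the second case $\hat S\in\{K\}\cup\{$type-\textsc{ii} cliques through $x\}$, and one derives that a type-\textsc{ii} clique through $x$ would be type-\textsc{i} in $G$, a contradiction. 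Feeding this claim back into the case analysis contradicts the hypothesis that $x$ is guarded, ruling out the diamond $D$ and so proving that $G^*$ is diamond-free, i.e.\ that $E'_\pm$ is a solution for $(G,k)$.
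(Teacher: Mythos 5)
Your proposal is correct, but it takes a genuinely different route from the paper's proof. The paper argues top--down: it shows that every maximal clique of $G^*=G\triangle E_\pm$ containing $x$ is of type \textsc{ii}, by proving that each component of $G[N(x)]$ is a single vertex or a type-\textsc{ii} maximal clique of $G^*-x$ --- Lemma~\ref{lem:bigT1}(\ref{item:bigT1:max-clique}) handles $K\setminus\{x\}$, Lemma~\ref{lem:type-2} handles the type-\textsc{ii} cliques through $x$, and Lemma~\ref{lem:addition} is invoked once to exclude added edges touching $N(x)$; Proposition~\ref{obs:diamond-vertices} then finishes. You instead refute a hypothetical diamond of $G^*$ through $x$ directly, via your locality property~($\star$) and a case analysis on the degree of $x$ in the diamond and on which cliques of $G$ through $x$ contain the other vertices; you never use Lemma~\ref{lem:type-2}, relying on Corollary~\ref{lem:deletion} and Lemma~\ref{lem:addition} in its place. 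I checked that your sketched case analysis closes: every branch ends either in a violation of~($\star$) or of Proposition~\ref{pro:untouchable-edge}(i), or in a neighbor $w$ of $x$ that lies in a type-\textsc{ii} clique $L\ni x$ of $G$ and in a small type-\textsc{i} clique of $G-x$, which your standalone transfer claim converts into vulnerability of $x$ in $G$, contradicting guardedness. That transfer claim --- small type-\textsc{i} cliques of $G-x$ lift to small type-\textsc{i} cliques of $G$ unless they absorb $x$, in which case they must equal $K$ or turn a type-\textsc{ii} clique into type \textsc{i} --- is the genuinely delicate point of either argument, and it is to your credit that you isolate and prove it; the paper applies Lemma~\ref{lem:addition} to the instance $(G-x,k)$ while justifying the hypothesis with cliques of $G$, leaving this identification implicit. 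What your route costs is length: the case analysis you call ``short'' is in fact a half-dozen subcases that must all be written out, whereas the paper's clique-by-clique formulation is more compact and reuses Lemma~\ref{lem:type-2} wholesale.
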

\begin{proof}
  It is easy to see that $(G-x, k)$ is a reduced instance, and every solution of $(G,k)$ confined to $G-x$ is a solution of $(G-x, k)$. 
  For the ``if'' direction, let $E_\pm$ be a minimum solution of $(G-x, k)$, and let $G^* = G\triangle E_\pm$.  Note that $(G-x)\triangle E_\pm= G^*-x$, and it is diamond-free.  No edge in $E_\pm$ is incident to $x$, and hence $N_G(x) = N_{G^*}(x)$, which we simply denote by $N(x)$. By Proposition~\ref{obs:diamond-vertices}, it suffices to prove that each maximal clique of $G^*$ containing $x$ is of type \textsc{ii}.  For this purpose, we show that each component of $G[N(x)]$ is either a single vertex or a type-\textsc{ii} maximal clique in $G^* - x$.

  Note that $K\setminus \{x\}$ is a big maximal clique in $G-x$: It is a clique of size at least $3k + 2$, and its maximality follows from Lemma~\ref{lem:bigT1}(i).  Hence, by Lemma~\ref{lem:bigT1}(\ref{item:bigT1:max-clique}), $K\setminus \{x\}$ is a maximal clique (of type \textsc{ii}) in $G^*-x$. Since $x$ is a guarded vertex that does not occur in any other type-\textsc{i} maximal clique, every other maximal clique $K'$ containing $x$ in $G$ is of type \textsc{ii}, and it cannot intersect any small type-\textsc{i} maximal clique. Therefore, by Lemma~\ref{lem:addition}, no edge added by $E_+$ can be incident to any vertex in $N(x)$. 
  From Lemma~\ref{lem:type-2} we can conclude that $K' \setminus \{x\}$ either contains only a vertex or is a maximal clique (of type \textsc{ii}) in $G^* - x$. 
  
  Since no edge added by $E_+$ is between two vertices in $N(x)$ and since $x$ is a guarded vertex, each component of $G[N(x)]$ is either $K\setminus \{x\}$ or $K' \setminus \{x\}$, hence is either a single vertex or a type-\textsc{ii} maximal clique in $G^* - x$.  This concludes the proof.
\end{proof}

\begin{lemma}\label{lem:big-t1}
  Let $(G, k)$ be a reduced yes-instance.  If Rule~\ref{rul:big-t1} is not applicable, then for each $K\in \bone{G}$, we have that $|K| = O(k^3)$.
\end{lemma}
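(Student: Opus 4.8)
The plan is to fix a minimum solution $E_\pm$ of $(G,k)$, put $G^* = G \triangle E_\pm$, and bound a single $K \in \bone{G}$ by carving $K$ into three groups. If $|K| = 3k+2$ then already $|K| = O(k)$, so assume $|K| \ge 3k+3$. Because Rule~\ref{rul:big-t1} is inapplicable, every guarded vertex of $K$ lies in some other type-\textsc{i} maximal clique. Writing $S(G) = \bigcup_{K' \in \sone{G}} K'$ as before, I would split $K$ into (a) $K \cap S(G)$; (b) the set $V_2$ of vulnerable vertices of $K$ lying outside $S(G)$; and (c) the set $G_b$ of guarded vertices of $K$. These cover $K$: a vulnerable vertex outside $S(G)$ cannot be vulnerable through condition~(1) of the definition, so it is vulnerable through a type-\textsc{ii} clique and hence lands in $V_2$.

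The two outer groups are routine. For (a), Lemma~\ref{lem:smallT1-vertex-count} gives $|K \cap S(G)| \le |S(G)| = O(k^3)$ directly. For (c), a guarded vertex is never in a small type-\textsc{i} clique (all such vertices are vulnerable), so the ``other'' type-\textsc{i} clique supplied by the inapplicability of Rule~\ref{rul:big-t1} is itself big; by Lemma~\ref{lem:bigT1}(\ref{item:bigT1:no-sharing}) it meets $K$ in exactly one vertex, and by Lemma~\ref{lem:bigT1-count} there are at most $6k^2$ big type-\textsc{i} cliques in total, so $|G_b| \le 6k^2$.

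The crux, and where I expect the real work, is bounding $V_2$. Each $v \in V_2$ carries a witness $w \in S(G)$ with $vw$ inside a type-\textsc{ii} clique that meets a small type-\textsc{i} clique; since a type-\textsc{ii} clique shares at most one vertex with $K$ and $v \in K \setminus S(G)$ while $w \in S(G)$, this forces $w \notin K$. I would dichotomize on $d_w := |N(w) \cap K|$. If $d_w \ge 2$, take two neighbors $v, v'$ of $w$ in $K$ together with some $u \in K$ nonadjacent to $w$ (which exists by maximality of $K$); then $\{u,v,v',w\}$ induces a diamond with cross edge $vv'$. As $K \in \bone{G}$, Lemma~\ref{lem:bigT1}(\ref{item:bigT1:max-clique}) keeps $K$ a maximal clique in the diamond-free graph $G^*$, hence a type-\textsc{ii} clique there, so $w$ has at most one neighbor in $K$ in $G^*$; this forces $E_-$ to delete at least $d_w - 1$ of the $w$--$K$ edges. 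These deleted edges have pairwise distinct endpoints $w$, so $\sum_{w:\, d_w \ge 2}(d_w - 1) \le |E_-| \le k$, whence the witnesses with $d_w \ge 2$ touch at most $2k$ vertices of $K$. If instead $d_w = 1$, then $N(w)\cap K = \{v\}$, so assigning each such $v$ its witness $w$ is injective into $S(G)$ and contributes at most $|S(G)| = O(k^3)$.

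Summing the three estimates gives $|K| = O(k^3)$. The main obstacle is exactly the $V_2$ bound: the naive count allows each of the $O(k^3)$ vertices of $S(G)$ to have up to $2k+1$ neighbors in $K$, which would only yield $O(k^4)$. The improvement rests on the dichotomy above---high-degree witnesses are ``paid for'' by the at most $k$ deletions of $E_-$, while low-degree witnesses embed injectively into $S(G)$.
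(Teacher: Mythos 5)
Your proof is correct, and its skeleton---splitting $K$ into $K\cap S(G)$, the vulnerable vertices of $K$ outside $S(G)$, and the guarded vertices, then bounding the first part by Lemma~\ref{lem:smallT1-vertex-count} and the third via Lemmas~\ref{lem:bigT1}(\ref{item:bigT1:no-sharing}) and~\ref{lem:bigT1-count}---is exactly the paper's. The only divergence is in the middle part. The paper argues purely structurally: if a witness $w\in S(G)\setminus K$ had a second neighbor $v'$ in $K$ besides $v$, the maximal clique containing the triangle $\{w,v,v'\}$ would share two vertices with $K$, hence be type-\textsc{i} and, by Lemma~\ref{lem:bigT1}(\ref{item:bigT1:no-sharing}), small, forcing $v\in S(G)$ and contradicting $v\in V_2$. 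So every witness of a vertex of $V_2$ has exactly one neighbor in $K$, the map from $V_2$ to its witnesses is injective into $S(G)$ outright, and your case $d_w\ge 2$ is in fact vacuous. Your treatment of that case---charging the surplus edges between $w$ and $K$ to $E_-$ via Lemma~\ref{lem:bigT1}(\ref{item:bigT1:max-clique}) and the diamond-freeness of $G^*$---is nonetheless sound, so nothing breaks; it merely replaces a one-line structural observation with a solution-counting argument and costs an extra additive $2k$ in the bound. Both routes yield $|K|=O(k^3)$.
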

\begin{proof}
  Without loss of generality, assume that $|K| \ge 3k + 3$. 
  Since Rule~\ref{rul:big-t1} is not applicable, $K$ does not contain any guarded vertex shared by other big type-\textsc{i} maximal cliques.  Thus, every vertex in $K$ is either a vulnerable vertex, or a guarded vertex in more than one big \tone\ maximal clique.
%  For each vertex $v$ in a small type-{\sc i} maximal clique, $K$ contains $v$ or at most one vulnerable vertex that is adjacent to $v$ by an edge of \ttwo\ maximal clique. 
  Let $U_1$ and $U_2$ be the set of vulnerable vertices in $K \cap S(G)$ and $K \setminus S(G)$ respectively. By the definition, each vertex in $U_2$ is adjacent to some vertex in $S(G) \setminus U_1$ by an edge of \ttwo\ maximal clique. For each vertex $v \in S(G) \setminus U_1$, the cardinality of $U_2 \cap N(v)$ is at most one; otherwise, there is a type-\textsc{i} maximal clique containing $U_2 \cap N(v)$ and $v$ which by Lemma~\ref{lem:bigT1}(\ref{item:bigT1:no-sharing}) is small, contradicting to $U_2 \subseteq K \setminus S(G)$. 
  Therefore, $|U_2| \le |S(G) \setminus U_1|$, and by Lemma \ref{lem:smallT1-vertex-count}, $K$ contains at most $18k^3 + 2k$ vulnerable vertices.
  By Lemma~\ref{lem:bigT1}(\ref{item:bigT1:no-sharing}), every pair of big \tone\ maximal cliques shares at most one vertex.  Hence, by Lemma~\ref{lem:bigT1-count}, 
  $K$ contains at most $6k^2$ guarded vertices that appear in some other big maximal cliques of type \textsc{i}.  
  Putting them together we get $|K|\leq 18k^3+2k+6k^2$.
\end{proof}

The next corollary follows immediately from Lemmas~\ref{lem:bigT1-count} and~\ref{lem:big-t1}. 
\begin{corollary}\label{col:bigT1-vertex-count}
  Let $(G, k)$ be a reduced yes-instance.  If Rule~\ref{rul:big-t1} is not applicable, then the number of vertices that are contained in some cliques in \bone{G} is $O(k^5)$.
\end{corollary}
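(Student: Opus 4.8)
The plan is to combine the two preceding lemmas by a crude union bound; since the corollary is asserted to follow ``immediately,'' no new ideas are needed. The set of vertices in question is exactly $\bigcup_{K \in \bone{G}} K$, so its cardinality is at most the sum $\sum_{K \in \bone{G}} |K|$ of the sizes of the big type-\textsc{i} maximal cliques.

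First I would invoke Lemma~\ref{lem:bigT1-count}, which gives $|\bone{G}| \le 6k^2$, i.e.\ there are only $O(k^2)$ big type-\textsc{i} maximal cliques. Next, since Rule~\ref{rul:big-t1} is assumed to be inapplicable, Lemma~\ref{lem:big-t1} bounds the size of each individual clique $K \in \bone{G}$ by $O(k^3)$. Multiplying these two bounds yields
\[
  \left| \bigcup_{K \in \bone{G}} K \right| \le \sum_{K \in \bone{G}} |K| \le |\bone{G}| \cdot \max_{K \in \bone{G}} |K| = O(k^2) \cdot O(k^3) = O(k^5),
\]
which is precisely the claimed bound.

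I do not expect any genuine obstacle here: the argument is a one-line product of the two lemma estimates. The only point worth a remark is that a vertex may lie in several big type-\textsc{i} maximal cliques simultaneously, so passing from the union to the sum of the sizes overcounts. However, Lemma~\ref{lem:bigT1}(\ref{item:bigT1:no-sharing}) guarantees that any two big maximal cliques share at most one vertex, so the overcounting is mild; and in any event the union is never larger than the sum, so the product bound already suffices and no finer accounting is required.
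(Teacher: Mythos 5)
Your argument is correct and is exactly the one the paper intends: the corollary is stated to follow immediately from Lemma~\ref{lem:bigT1-count} (at most $6k^2$ big type-\textsc{i} maximal cliques) and Lemma~\ref{lem:big-t1} (each of size $O(k^3)$ once Rule~\ref{rul:big-t1} is inapplicable), multiplied together. Your extra remark about overcounting is harmless but unnecessary, since the union is bounded by the sum regardless.
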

%\end{comment}
\subsection{Maximal cliques of type {\sc ii} }

We have bounded the number of vertices in all maximal cliques of type \textsc{i}, and it remains to bound the number of vertices that occur \textit{only} in maximal cliques of type \textsc{ii}.  Let $T(G)$ denote these vertices, i.e., $T(G) = V(G)\setminus \bigcup\limits_{K\in \sone{G}\cup \bone{G}} K$.   It may not be surprising that we can delete all the guarded vertices in them.

\begin{redrule}\label{rul:guarded-vertices}
	If there is a guarded vertex $x$ not in any type-\textsc{i} maximal clique of $G$, delete it.
\end{redrule}

\begin{lemma}\label{lem:guarded-vertices}
  Rule \ref{rul:guarded-vertices} is safe: A reduced instance $(G, k)$ is a yes-instance if and only if $(G - x, k)$ is a yes-instance.
\end{lemma}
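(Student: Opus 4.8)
The plan is to follow the template of the proof of Lemma~\ref{lem:rul:big-t1}. The ``only if'' direction is a routine restriction argument: from a solution $E_\pm$ of $(G,k)$ one discards the (at most $k$) pairs incident to $x$ to obtain $E'_\pm$ with $(G-x)\triangle E'_\pm=(G\triangle E_\pm)-x$, an induced subgraph of a diamond-free graph and hence diamond-free. I would also record at the outset that $(G-x,k)$ is itself reduced, since any applicable instance of Rule~\ref{rul:nonedge-sunflower} or~\ref{rul:edge-sunflower} in $G-x$ lifts verbatim to $G$; this is needed so that the later lemmas may be invoked on $(G-x,k)$. For the ``if'' direction, let $E_\pm$ be a \emph{minimum} solution of $(G-x,k)$ and set $G^*=G\triangle E_\pm$. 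As $x\notin V(G-x)$, no pair of $E_\pm$ is incident to $x$, so $N_{G^*}(x)=N_G(x)=:N(x)$ and $G^*-x=(G-x)\triangle E_\pm$ is diamond-free. By Proposition~\ref{obs:diamond-vertices} it then suffices to show that $x$ lies in no type-\textsc{i} maximal clique of $G^*$; for then $x$ is in no diamond, and together with diamond-freeness of $G^*-x$ this makes $G^*$ diamond-free.

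First I would pin down the structure of $N(x)$ in $G$. Since $x$ belongs only to maximal cliques of type \textsc{ii}, the maximal cliques $K_1,\dots,K_m$ containing $x$ pairwise meet exactly in $\{x\}$; hence they partition $N(x)$ and carry no edges across one another, so $G[N(x)]$ is the disjoint union of the cliques $K_i\setminus\{x\}$. Guardedness of $x$ says that each $K_i$ is disjoint from every clique in $\sone{G}$, so no vertex of $N(x)$ lies in a small type-\textsc{i} maximal clique of $G$; equivalently, every type-\textsc{i} maximal clique of $G$ meeting $N(x)$ is big.

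The heart of the argument is to carry this cluster structure of $N(x)$ over from $G$ to $G^*$, via two steps: (a) $E_+$ adds no edge incident to $N(x)$, and (b) each $K_i\setminus\{x\}$ survives as a type-\textsc{ii} maximal clique of $G^*-x$. For (a), Lemma~\ref{lem:addition} applied to the reduced yes-instance $(G-x,k)$ reduces matters to checking that no $v\in N(x)$ lies in a small type-\textsc{i} maximal clique of $G-x$; granting this, Lemma~\ref{lem:type-2} yields (b), with its hypotheses (i)--(ii) holding vacuously since $E_+$ does not touch $N(x)$. The main obstacle is precisely the claim behind (a), because deleting $x$ can in principle change which maximal cliques are type-\textsc{i} and which are big. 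The key point I would prove carefully is that removing a single vertex can only destroy, never create, the two-vertex overlaps that define type \textsc{i}: every maximal clique of $G-x$ is either a maximal clique of $G$ avoiding $x$ (same vertices, hence same size, and type \textsc{ii} in $G$ forces type \textsc{ii} in $G-x$) or one of the $K_i\setminus\{x\}$ (shown type \textsc{ii}). Consequently any type-\textsc{i} maximal clique of $G-x$ through $v$ descends from a type-\textsc{i} maximal clique of $G$ through $v$, which is big and keeps its size, so $v$ avoids small type-\textsc{i} cliques in $G-x$, establishing (a).

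Finally, steps (a)--(b) force the components of $G^*[N(x)]$ to be exactly the cliques $K_i\setminus\{x\}$, so the maximal cliques of $G^*$ containing $x$ are precisely $\{x\}\cup(K_i\setminus\{x\})=K_i$. Each such clique is of type \textsc{ii}: it meets any other $K_j$ only in $x$, and meets any maximal clique of $G^*$ avoiding $x$ in at most one vertex, because $K_i\setminus\{x\}$ is a type-\textsc{ii} maximal clique of the diamond-free graph $G^*-x$. Hence $x$ lies in no type-\textsc{i} maximal clique of $G^*$, so $x$ is in no diamond and $(G,k)$ is a yes-instance, completing the proof.
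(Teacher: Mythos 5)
Your proof is correct and follows essentially the same route as the paper's: both hinge on showing that, for a minimum solution $E_\pm$ of $(G-x,k)$, Lemma~\ref{lem:addition} forbids added edges incident to $N(x)$ and Lemma~\ref{lem:type-2} (equivalently, Corollary~\ref{lem:deletion}) forbids deletions inside the type-\textsc{ii} maximal cliques through $x$, so that $x$ lies in no diamond of $G\triangle E_\pm$. Your explicit verification that a small type-\textsc{i} maximal clique of $G-x$ meeting $N(x)$ would have to descend from a (necessarily big) type-\textsc{i} maximal clique of $G$ addresses a $G$-versus-$(G-x)$ subtlety that the paper's proof passes over silently.
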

\begin{proof}
  It is easy to see that $(G-x, k)$ is a reduced instance, and every solution of $(G,k)$ confined to $G-x$ is a solution of $(G-x, k)$. 
%  It is easy to see that every solution of $(G,k)$ confined to $G-x$ is  a solution of $(G-x,k)$. 
  For the other direction, let $E_\pm$ be a minimum solution of $(G-x,k)$, and it is sufficient to show that $x$ is not part of any diamond in $G^* = G\triangle E_\pm$.  
  Note that $x$ is a vertex which is part of only \ttwo\ maximal cliques in $G$
  and not adjacent to any vertex in small \tone\ maximal cliques in $G$. Therefore, by Lemma~\ref{lem:addition}, none of the vertices in $N(x)$ is incident to any edges of $E_+$. If $x$ is part of a diamond in $G^*$, then it is formed by a deletion of an edge in $G[N[x]]$ by $E_-$. But this is not possible by Corollary~\ref{lem:deletion},
  as none of the edges in $G[N[x]]$ is part of any \ttwo\ maximal clique which intersects with a small \tone\ maximal clique in $G-x$.
%  \todo{S:proof simplified, please verify.}
\end{proof}

If Rule~\ref{rul:guarded-vertices} is not applicable, then all vertices in $T(G)$ are vulnerable.
As demonstrated in Figure~\ref{fig:example}, an edge may be deleted from a maximal clique of type \textsc{ii}.  In that example, neither end of the deleted edge $v_0 v_1$ is in any maximal clique of type \textsc{i}.  This can happen only \textit{after} some modification happens in the neighborhood of this vertex---$u_2 v_2$ added in the example.  Indeed, we may consider the added/deleted edges stepwise, then there is an order such that each edge is added/deleted \textit{only if} it is in some diamond.  One modification may introduce new diamond(s) not in the original graph.  For example, neither $v_0 v_1$ nor $u_1v_2$ is in a diamond of $G$, but the addition of $u_2 v_2$ jeopardizes $u_1v_2$, whose deletion consequently brings $v_0 v_1$ down.

This example is actually exemplary: The only way to bring an edge in a maximal clique $K$ of type \textsc{ii} to a diamond is through adding edge(s) between $K$ and other vertices.  
According to Proposition~\ref{pro:untouchable-edge}, however, this would not happen when $|K| \ge k + 3$.  In other words, to make sure a large clique in \two{G} is immutable to future modifications, it suffices to keep $k + 3$ of its vertices. 
This motivates the following reduction rule, whose statement is however more complex than  previous ones.  The main trouble here is that we are not allowed to delete all but $k + 3$ guarded vertices from a clique in \two{G}, because it may be required for another clique in \two{G}.

%For a pair of vertices $u, v$, we denote by $N(u, v)$ the set of common neighbors of $u$ and $v$ in $T(G)$, i.e., $N(u, v) = N(u)\cap N(v)\cap T(G)$.
%
%\begin{proposition}\label{pro:hanging-vertices}
%  Let $u,v$ be two vertices in $G$.  If $uv \not\in E(G)$, then $N(u, v)$ form an independent set.  Moreover, if $uv\in E_+$ for a solution $E_\pm$ of $(G,k)$, then $|N(u, v)|\leq k$.
%\end{proposition}
%\begin{proof}
%  If $G[N(u,v)]$ has an edge, say $xy$, then $\{u,v,x,y\}$ forms a diamond and hence $x,y\notin T(G)$, which is a contradiction. The second claim follows from Proposition~\ref{pro:untouchable-edge}.
%\end{proof}

For a pair of vertices $u, v$, we denote by $N(u, v)$ the set of common neighbors of $u$ and $v$ not in $S(G)$, i.e., $N(u, v) = (N(u)\cap N(v)) \setminus S(G)$.

\begin{proposition}\label{pro:hanging-vertices}
  Let $u,v$ be two vertices in $G$.  If $uv \not\in E(G)$, then $N(u, v)$ form an independent set.  Moreover, if $uv\in E_+$ for a solution $E_\pm$ of $(G,k)$, then $|N(u, v)|\leq k$.
\end{proposition}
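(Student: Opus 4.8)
The plan is to treat the two assertions separately, both driven by the diamond that a pair of common neighbors of a non-edge creates. For the first (independence) claim I would argue by contradiction: assume $uv\notin E(G)$ but some $x,y\in N(u,v)$ are adjacent. Since $x$ and $y$ are common neighbors of the nonadjacent pair $u,v$, the set $\{u,v,x,y\}$ induces a diamond whose cross edge is $xy$ and whose missing edge is $uv$. The edge $xy$ then lies in two distinct maximal cliques, one extending the triangle on $\{u,x,y\}$ and one extending the triangle on $\{v,x,y\}$; these are distinct because $u,v$ are nonadjacent and cannot share a clique. Each of them contains $\{x,y\}$, so each is of type \textsc{i}, and by Lemma~\ref{lem:bigT1}(\ref{item:bigT1:no-sharing}) two big type-\textsc{i} cliques meet in at most one vertex, so at least one of the two is small. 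That small clique places $x,y$ in $S(G)$, contradicting $x,y\in N(u,v)=(N(u)\cap N(v))\setminus S(G)$. Hence $N(u,v)$ is independent.

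For the second claim, fix a solution $E_\pm$ with $uv\in E_+$; then $uv\notin E(G)$, so the first part makes $N(u,v)$ independent in $G$. Write $G^\ast=G\triangle E_\pm$ and let $R$ be the set of $w\in N(u,v)$ that remain adjacent to \emph{both} $u$ and $v$ in $G^\ast$. Because $uv$ is added, $u$ and $v$ are adjacent in $G^\ast$, so any two nonadjacent members of $R$ would form with $u,v$ an induced diamond (cross edge $uv$); thus $R$ is a clique in $G^\ast$. As $R$ was independent in $G$, all $\binom{|R|}{2}$ edges inside $R$ lie in $E_+$, and they are distinct from $uv$ since $u,v\notin N(u,v)$. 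Every remaining vertex $w\in N(u,v)\setminus R$ must have lost an edge to $u$ or to $v$, contributing a deleted edge $uw$ or $vw$ to $E_-$; distinct $w$ give distinct such edges, so $|N(u,v)\setminus R|\le |E_-|$.

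I would then combine these counts against the budget $|E_+|+|E_-|\le k$, using $|E_+|\ge \binom{|R|}{2}+1$ (the $+1$ charged to $uv$ itself) to get $|E_-|\le k-\binom{|R|}{2}-1$, and hence
\[
|N(u,v)| = |R| + |N(u,v)\setminus R| \le |R| + k - \binom{|R|}{2} - 1 = k + \Bigl(|R| - \binom{|R|}{2} - 1\Bigr).
\]
The proof finishes with the elementary fact that $t-\binom{t}{2}-1\le 0$ for every integer $t\ge 0$ (check $t\le 2$ directly, and for $t\ge 2$ use $\binom{t}{2}\ge t-1$), which gives $|N(u,v)|\le k$. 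I expect the delicate point to be pinning the bound at exactly $k$ rather than $k+1$: this relies both on charging the one unit of budget that $uv$ already consumes and on the superlinearity of $\binom{|R|}{2}$, which prevents a large ``surviving'' clique $R$ from evading the count. As a simpler alternative when $E_\pm$ is a \emph{minimum} solution, Lemma~\ref{lem:addition} forbids any edge of $E_+$ incident to a vertex outside $S(G)$, so no edge inside $N(u,v)$ can be added; then $R$ can contain at most one vertex, and the bound $|N(u,v)|\le 1+|E_-|\le 1+(k-1)=k$ follows immediately.
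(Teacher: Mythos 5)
Your proof is correct. For the independence claim you follow essentially the paper's own argument: an edge $xy$ inside $N(u,v)$ yields the diamond $\{u,v,x,y\}$, the two maximal cliques extending $\{u,x,y\}$ and $\{v,x,y\}$ are distinct type-\textsc{i} cliques sharing $\{x,y\}$, so by Lemma~\ref{lem:bigT1}(\ref{item:bigT1:no-sharing}) one of them is small, contradicting $x,y\notin S(G)$. For the cardinality bound you diverge: the paper disposes of it in one line by combining the independence of $N(u,v)$ with Proposition~\ref{pro:untouchable-edge}(ii), which already says that a non-edge with $k+1$ pairwise nonadjacent common neighbors cannot lie in $E_\pm$. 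Your explicit count --- splitting $N(u,v)$ into the set $R$ of surviving common neighbors, which must become a clique and hence costs $\binom{|R|}{2}$ additions on top of $uv$, versus the rest, each charged to a distinct deletion, and then using $t-\binom{t}{2}-1\le 0$ --- is in effect a self-contained re-proof of the omitted case (ii) of that proposition, specialized to this situation. It is slightly longer but has the virtue of making the budget accounting (including the unit spent on $uv$ itself) fully explicit, whereas the paper's proof of Proposition~\ref{pro:untouchable-edge} is only sketched for case (i). Your closing alternative via Lemma~\ref{lem:addition} is fine as a remark but, as you note, needs minimality of $E_\pm$, which the proposition does not assume, so the counting argument is the right one to keep.
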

\begin{proof}
	If $G[N(u,v)]$ has an edge, say $xy$, then $\{u,v,x,y\}$ forms a diamond. 
	There are two type-\textsc{i} maximal cliques containing $\{x,y,u\}$ and $\{x,y,v\}$ respectively. By Lemma~\ref{lem:bigT1}(\ref{item:bigT1:no-sharing}), at least one of them is small, contradicting to $x,y \notin S(G)$. The second claim follows from Proposition~\ref{pro:untouchable-edge}.
\end{proof}

Our last rule would keep at most $k + 1$ from such sets.  To avoid unnecessary clutters, we simply say we mark $k + 1$ vertices in $N(u, v)$, even if its size is smaller than $k + 1$; in which case, we mark all of them.

\begin{redrule}\label{rul:vulnerable-vertices}
  For each pair of vertices $u, v \in S(G)$, 
  arbitrarily mark $k + 1$ vertices in $N(u, v)$.  If $|N(u, v)|\le k$, then for each vertex $w \in N(u, v)$, arbitrarily mark $k+1$ vertices in $N(u, w)$ and $k+1$ vertices in $N(v, w)$.  If there is an unmarked vertex $x$ in $T(G)$, delete it.
\end{redrule}

\begin{lemma}\label{lem:vulnerable-vertices}
  Rule \ref{rul:vulnerable-vertices} is safe: A reduced instance $(G, k)$ is a yes-instance if and only if $(G - x, k)$ is a yes-instance. 
\end{lemma}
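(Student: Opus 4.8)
The plan is to prove both directions, the forward one being routine and the backward one carrying all the work. For the forward direction I would argue exactly as in Lemmas~\ref{lem:rul:big-t1} and~\ref{lem:guarded-vertices}: $(G-x,k)$ is again reduced, and restricting any solution of $(G,k)$ to $G-x$ (discarding the edits incident to $x$) yields a solution of $(G-x,k)$. For the backward direction, let $E_\pm$ be a minimum solution of $(G-x,k)$ and put $G^*=G\triangle E_\pm$. Since $x\notin V(G-x)$, no (non-)edge of $E_\pm$ is incident to $x$, so $N_G(x)=N_{G^*}(x)$, which I abbreviate $N(x)$. As $x\in T(G)$ lies only in \ttwo\ maximal cliques, by Proposition~\ref{obs:diamond-vertices} it is in no diamond of $G$, hence $G[N(x)]$ is a disjoint union of cliques. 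It suffices to show $x$ is in no diamond of $G^*$; suppose for contradiction it is in a diamond $D$. Because the (non-)edges of $D$ incident to $x$ are unchanged, some (non-)edge of $D$ between neighbours of $x$ is modified, and I would fix two vertices $a,b\in D\cap N(x)$ with $a\sim_{G^*}b$, so that $\{x,a,b\}$ is a triangle of $G^*$.

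I then split on whether the edge $ab$ was added. If $ab\in E_+$, then $a,b\in S(G)$ by Lemma~\ref{lem:addition} and $x\in N(a,b)$; Rule~\ref{rul:vulnerable-vertices} marked $k+1$ vertices of $N(a,b)$, and since the unmarked $x$ is a further element we get $|N(a,b)|\ge k+2$, contradicting Proposition~\ref{pro:hanging-vertices} (applied in $G-x$), which bounds $|N(a,b)|$ by $k$ for an added edge. Otherwise $ab\in E(G)$, so $\{x,a,b\}$ extends to a \ttwo\ maximal clique $M\ni x$, and the fourth vertex $c$ of $D$ is adjacent in $G^*$ to two vertices of $M$. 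Since a vertex outside a \ttwo\ clique has at most one neighbour in it, this forces either $c\in M$, so that an edge of $G[M]$ must have been deleted to create $D$ (the deletion case), or $c\notin M$ with at least one edge between $c$ and $M$ added (the addition case).

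In the addition case the responsible added edge $ca$ has both ends in $S(G)$ by Lemma~\ref{lem:addition}, and $c$ is adjacent to exactly one vertex of $M$ in $G$. Here I would again use that the appropriate pair is marked: $(a,b)$ directly if $b\in S(G)$, and via the second level of Rule~\ref{rul:vulnerable-vertices} with $b\in N(a,c)$ if $b\notin S(G)$. Either way $x\in N(a,b)$ and the unmarked $x$ yields $|N(a,b)|\ge k+2$. As $N(a,b)$ consists of common neighbours of the edge $ab$ of the \ttwo\ clique $M$, it lies inside $M$ and is a clique; each of its at least $k+1$ members other than $x$ is adjacent to $a$ and $b$ but not to $c$, hence forms a distinct diamond with $c$ that $E_\pm$ must kill by an edit incident to it. These edits are distinct, forcing $|E_\pm|\ge k+1$, a contradiction.

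The deletion case is the main obstacle. Here an edge of $G[M]$ is deleted, and I must trace this necessary deletion back through the edit propagation to a genuine addition: by the contrapositive of Lemma~\ref{lem:type-2} together with Corollary~\ref{lem:deletion}, a \ttwo\ clique can lose an internal edge only after an edge from it to the outside has been added, and that added edge has both ends in $S(G)$. The target is to produce $u,v\in S(G)$ with $uv\in E_+$ and an intermediate vertex $w\in N(u,v)$ with $uw\in E_-$ such that $x\in N(u,w)$; then the two-level marking applies, $x$ unmarked gives $|N(u,w)|\ge k+2$, and---crucially---since $w\notin S(G)$ the set $N(u,w)$ is a clique (a non-adjacent pair in it would, with the edge $uw$, form a diamond whose two \tone\ cliques share $u,w$, one of them small by Lemma~\ref{lem:bigT1}(\ref{item:bigT1:no-sharing}), forcing $w\in S(G)$). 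A clique of $k+1$ pairwise-adjacent common neighbours of the deleted edge $uw$ then contradicts Proposition~\ref{pro:untouchable-edge}(i). The delicate point I expect to fight with is ensuring that $x$ is reached within the \emph{two} levels the marking provides---that is, that the chain from the root addition to the edge whose deletion exposes $x$ has effective length at most two---so that the marked set $N(u,w)$ genuinely contains $x$.
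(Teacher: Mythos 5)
Your forward direction is fine, and your two ``addition'' cases (the triangle edge $ab$ itself added; an edge between the outside vertex $c$ and the type-\textsc{ii} clique $M$ added) track the paper's own Cases 1 and 2 closely: both rest on the observation that the unmarked vertex $x$ certifies $|N(u,v)|\ge k+2$ for the relevant pair, which either contradicts Proposition~\ref{pro:hanging-vertices} outright or produces $k+1$ diamonds that cannot all be repaired. Even there you are a little loose at the end: you must rule out that a single shared edit (deleting $ab$ or $cb$, say) kills all $k+1$ diamonds $\{a,b,c,z\}$ at once; the paper gets this from Proposition~\ref{pro:untouchable-edge}(i), since $N(a,b)$ is a clique of at least $k+1$ common neighbours of $a$ and $b$ inside $M$.

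The genuine gap is the deletion sub-case, and you have flagged it yourself. Tracing the edit propagation back to a ``root addition'' and hoping the chain has effective length at most two will not work: the chain of forced deletions can be long, and the two-level marking of Rule~\ref{rul:vulnerable-vertices} is not designed to follow it, so there is no reason the marked sets ever reach $x$. The paper sidesteps propagation entirely. It fixes the maximal clique $K$ of $G$ containing $x$ (necessarily of type \textsc{ii}) and uses exactly the addition analysis to show that \emph{every} vertex $y\in V(G)\setminus K$ satisfies $|N_{G^*}(y)\cap K|\le 1$. Once that is established, Proposition~\ref{lem:replacement} applies with $U=K$: if $E_-$ contained any edge of $G[K]$, replacing $G^*[K]$ by the original clique $G[K]$ would still be diamond-free and would yield a strictly smaller solution, contradicting the minimality of $E_\pm$. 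Hence no edge inside $K$ is deleted at all, $K$ remains a type-\textsc{ii} maximal clique of $G^*$, and since no edge is added between two vertices of $N_G(x)$ either, $x$ lies in no diamond of $G^*$. This replacement-plus-minimality step is the missing idea; without it your deletion case does not close.
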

\begin{proof}
  It is easy to see that $(G-x, k)$ is a reduced instance, and every solution of $(G,k)$ confined to $G-x$ is a solution of $(G-x, k)$. 
%  Any solution of $(G,k)$ confined to $G-x$ is a solution of $(G-x, k)$. 
  For the ``if'' direction, let $E_\pm$ be a minimum solution of $(G-x, k)$, and let $G^* = G\triangle E_\pm$.  We show that each maximal clique of $G^*$ containing $x$ is a maximal clique of $G$ and is of type \textsc{ii} in $G^*$.  Since Proposition~\ref{obs:diamond-vertices} implies that deleting a vertex not in any \tone\ maximal clique does not alter type-\textsc{i} maximal cliques, we have $S(G') = S(G)$.
	
  Let $K$ be a maximal clique of $G$ containing $x$; note that $K$ is a maximal clique of type \textsc{ii} in $G$, as $x \in T(G)$.  We argue that $|N_{G^*}(y)\cap K| \le 1$ for every $y\in V(G) \setminus K$.
  % Suppose for contradiction that $y$ is adjacent to at least two vertices of $K$ in $G^*$.
  Since $K$ is a maximal clique of type \textsc{ii} in $G$, we have (1) $|N_{G}(y)\cap K|$ is either $0$ or $1$; and (2) for every pair of vertices $u, v\in K$, 
  \[
    N(u, v)\subseteq N_G(u)\cap N_G(v) = K.
  \]

  Suppose first that there are at least two edges between $y$ and $K$ in $E_+$.
  Let $u,v \in K$ be two vertices such that $yu, yv \in E_+$.  Then by Lemma~\ref{lem:addition}, $u,v \in S(G')$, and hence $u, v \in S(G)$.  Clearly, 
  $x\neq u$, $x\neq v$ and $x$ is an unmarked vertex in $N(u, v)$. Further, there are $k+1$ marked vertices in $N(u, v)$.  It follows that $|K\setminus \{x\}|\ge k + 3$, and $E_-$ does not have
  any edge in $G'[K\setminus \{x\}]$ by Proposition~\ref{pro:untouchable-edge}(i).  %, which form a clique in $G'$.
  Therefore, for each marked vertex $z \in N(u, v)$ that is not adjacent to $y$, the set $\{u,v, y,z\}$ induces a diamond in $G' + \{yu, yv\}$.  The only edge we can edit is $yz$, but $|N_G(y)\cap K| \le 1$, and there are at least $k + 2$ edges between $y$ and $K$, which is impossible.

  Hence, at most one edge can be added between $y$ and $K$ by $E_+$.  If $|N_{G}(y)\cap K| = 0$, or $|N_{G}(y)\cap K| = 1$ but the only edge between $y$ and $K$ is deleted, then it is trivial that
  $y$ is adjacent to at most one vertex of $K$ in $G^*$.  Suppose that $N_{G^*}(y)\cap K = \{u, v\}$ while only $u$ is in $N_G(y)$; note that $y u \not\in E_-$ and $yv \in E_+$.  By Lemma~\ref{lem:addition}, $y, v \in S(G')$, and hence in $S(G)$.  
  According to Proposition~\ref{pro:hanging-vertices}, there are at most $k$ vertices in $N(v, y)$ in $G'$.  If $u \notin S(G)$, then it has been marked; hence  $x \ne u$.  Also, $x \neq v$ as $x \in T(G)$.  By the rule, no matter whether $u$ is in $S(G)$ or not, we should have marked vertices in $N(u, v)$. Since $x\in N(u,v)$ but is not marked, we have $|N(u, v)| > k + 1$.  Let $z$ be any marked vertex in $N(u, v)$; it is not in $N_G(y)$ by assumption.
  % Note that $y z\not\in E(G)$, as otherwise $\{u,v, y,z\}$ would induce a diamond in $G$, which is impossible because $x\in T(G)$.
%$uv \notin E_-$
But then $\{u,v, y,z\}$ induces a diamond in $G' + yv$, in which we have to add the missing edge $y z$, which requires $|E_+| > k$, a contradiction.
	
 We have thus concluded  $|N_{G^*}(y)\cap K| \le 1$ for each vertex $y$ in $V(G) \setminus K$.  By Proposition~\ref{lem:replacement}, $K\setminus \{x\}$ remains a clique in $G^* - x$, otherwise we can find a strictly smaller solution. Then $K$ is a maximal clique of type \textsc{ii} in $G^*$.   On the other hand, according to Proposition~\ref{pro:hanging-vertices}, no edge is added between two vertices of $N_G(x)$.  Therefore, $N(x)$ induces exactly the same subgraph in $G$ and $G^*$.  Hence, any maximal clique of $G^*$ containing $x$ is a maximal clique of $G$ as well, hence of type \textsc{ii} in $G^*$.  This concludes the proof of the lemma.         
\end{proof}

Now Theorem~\ref{thm:main} follows by counting numbers of different kinds of vertices.

\begin{proof}[Proof of Theorem~\ref{thm:main}]
We show first that Rules~\ref{rul:big-t1}--\ref{rul:vulnerable-vertices} can be applied in polynomial time. 
For a guarded vertex $x$, $N(x)$ induces a cluster graph and each maximal clique in the cluster graph together with $x$ forms the maximal cliques of $G$ containing $x$. 
Recall that a maximal clique is of type \textsc{i} if and only if it contains both ends of a cross edge. 
Since the procedure \texttt{partition} finds all guarded vertices (no mark) and cross edges, we can find for each guarded vertex all type-\textsc{i} maximal cliques and type-\textsc{ii} maximal cliques containing it in polynomial time. 
Therefore, both Rules~\ref{rul:big-t1} and Rule~\ref{rul:guarded-vertices} can be applied in polynomial time. 
Moreover, the procedure \texttt{partition} finds all vertices in $S(G)$ (mark ``small'') and $T(G)$ (no mark ``type \textsc{i}''), and hence Rule~\ref{rul:vulnerable-vertices} can be applied in polynomial time.
%Procedure partition gives us the set of guarded vertices (category (v)). Consider any guarded vertex $x$ in $G$. Clearly, $N(x)$ induces a cluster graph and each clique in the cluster along with $x$ forms the maximal cliques of $G$ containing $x$. If there exists exactly one maximal clique containing $x$ and having a cross edge (marked by \textit{partition}) such that it has at least $3k+3$ vertices, then $x$ is eligible for deletion by Rule~\ref{rul:big-t1}. Note that identifying such vertices can be done in polynomial time. 
%Moreover, Rule~\ref{rul:guarded-vertices} can be carried out in polynomial time by looking for cross edges in the neighborhood of a guarded vertex.  Also, procedure \textit{partition} gives us the means to carry out Rule~\ref{rul:vulnerable-vertices} in polynomial time. 

We claim that if none of Rules~\ref{rul:big-t1}--\ref{rul:vulnerable-vertices} is applicable to a reduced yes-instance $(G,k)$, then $|V(G)| = O(k^8)$. 	
By Lemma \ref{lem:smallT1-vertex-count}, the number of vertices in small type-{\sc i} maximal cliques is $|S(G)| = O(k^3)$. 
By Corollary~\ref{col:bigT1-vertex-count}, we have $O(k^5)$ vertices in big type-{\sc i} maximal cliques. 
For each pair of vertices $u, v$ in $S(G)$, we mark at most $k+1$ common neighbors of them. 
And for each common neighbor $w$ of $u, v$, we mark at most $2k+2$ vertices: $k+1$ vertices in $N(u, w)$ and $k+1$ vertices in $N(v, w)$. 
Hence $|T(G)| = O(k^8)$, and $|V(G)| = O(k^3) + O(k^5) + O(k^8) = O(k^8)$.
\end{proof}

\section{A cubic kernel for diamond-free edge deletion}
\label{sec:deletion}

We now present a cubic-vertex kernel for the diamond-free edge deletion problem.  Note that if $G -E_-$ is diamond-free, then $E_-$ can be viewed as a solution to the diamond-free editing problem as well, where $E_+=\emptyset$.  Therefore, most statements, except those on minimum solutions, also hold for $E_-$.  
We will need Rule~\ref{rul:edge-sunflower} from page~\pageref{rul:edge-sunflower}; for the sake of completeness, we include it here.

\setcounter{redrule}{0}
\begin{redrule}
  \label{rul:DD-1}
  If there exist an edge $uv$ and $2k + 2$ distinct vertices $x_1, y_1, \ldots, x_{k + 1}, y_{k + 1}$ in $N(u) \cap N(v)$ such that $x_i y_i\not\in E(G)$ for all $1\le i \le k + 1$, then delete $uv$ and decrease $k$ by one.
\end{redrule}

The correctness of the following rule is also straightforward. 
\begin{redrule}
  \label{rul:DD-2}
  % If there are two big maximal cliques $K_1$ and $K_2$ such that $|K_1\cap K_2| \ge 2$, return a trivial no-instance.
  Mark an edge $uv$ ``permanent'' if there are $2k + 2$ distinct vertices $x_1, y_1, \ldots, x_{k + 1}, y_{k + 1}$ in $N(u) \cap N(v)$ such that $x_i y_i\in E(G)$ for all $1\le i \le k + 1$.  If there exists a diamond consisting of only permanent edges, return a trivial no-instance.
\end{redrule}

An instance of the diamond-free edge deletion~problem is \textit{reduced} if neither of Rules~\ref{rul:DD-1} and \ref{rul:DD-2} is applicable.  Henceforth we are concerned exclusively with reduced instances.  
\begin{proposition}
	\label{lem:bigT1:deletion}
	Two big maximal cliques of a reduced instance $(G,k)$ share at most one vertex.
\end{proposition}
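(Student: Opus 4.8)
The plan is to mimic the proof of Lemma~\ref{lem:bigT1}(\ref{item:bigT1:no-sharing}), with the permanent-edge rule (Rule~\ref{rul:DD-2}) playing the role that the non-edge sunflower rule played in the editing setting. Since we may no longer add edges, I would not try to bound $|K_1\cap K_2|$ as a first step; instead I would derive a direct contradiction by exhibiting a diamond all of whose edges are permanent, which makes Rule~\ref{rul:DD-2} applicable and hence contradicts the assumption that $(G,k)$ is reduced.

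So suppose for contradiction that two big maximal cliques $K_1,K_2$ satisfy $|K_1\cap K_2|\ge 2$, and fix two distinct vertices $u,v\in K_1\cap K_2$. As $K_1,K_2$ are distinct maximal cliques, $K_2\setminus K_1\neq\emptyset$; pick any $y\in K_2\setminus K_1$. By the maximality of $K_1$, the vertex $y$ has a non-neighbor in $K_1$, and this non-neighbor cannot lie in $K_1\cap K_2\subseteq K_2$ (every vertex of $K_2$ is adjacent to $y$); hence there is $x\in K_1\setminus K_2$ with $xy\notin E(G)$. The four vertices $\{u,v,x,y\}$ then induce a diamond: $ux,vx$ are edges of $K_1$, $uy,vy$ are edges of $K_2$, $uv$ lies in both cliques, and $xy$ is the missing edge, so $uv$ is the cross edge.

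It then remains to observe that all five edges of this diamond are permanent. Each of them lies inside $K_1$ or $K_2$, and for any edge $ab$ contained in a big maximal clique $K$ the common neighborhood $N(a)\cap N(b)$ contains the $|K|-2\ge 3k$ remaining, pairwise adjacent, vertices of $K$; selecting $2k+2$ of them and pairing them up yields $k+1$ vertex-disjoint edges inside $N(a)\cap N(b)$, so Rule~\ref{rul:DD-2} marks $ab$ permanent. Thus $\{u,v,x,y\}$ is a diamond consisting only of permanent edges, and Rule~\ref{rul:DD-2} would return a trivial no-instance---contradicting that $(G,k)$ is reduced.

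The only delicate point is the permanent-marking count: it relies on every internal edge of a big clique having enough pairwise-adjacent common neighbors, which is precisely where the threshold $3k+2$ in the definition of ``big'' is used (so that $3k\ge 2k+2$). I expect this counting step, together with the two boundary existence claims above, to be essentially the only obstacle. Note the contrast with the editing proof: because the all-permanent diamond already contradicts reducedness, no preliminary bound on $|K_1\cap K_2|$ (and no edge-addition rule) is needed here.
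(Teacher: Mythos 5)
Your proof is correct and follows essentially the same route as the paper's (much terser) argument: every edge inside a big maximal clique is marked permanent by Rule~\ref{rul:DD-2}, and two big maximal cliques sharing two vertices yield a diamond whose five edges all lie in one of the two cliques, so Rule~\ref{rul:DD-2} would return a no-instance. The one caveat you half-acknowledge---that the counting $3k\ge 2k+2$ needs $k\ge 2$ (for $k\le 1$ one must fall back on Rule~\ref{rul:DD-1})---is equally present in the paper's own proof, so it is not a gap specific to your write-up.
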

\begin{proof}
	Each edge in a maximal clique would be marked permanent by Rule~\ref{rul:DD-2}.  Therefore, if two big maximal cliques share more than one vertex, there is a diamond in them, consisting of only permanent edges.
	% Let $K_1$ and $K_2$ be two 
\end{proof}

Again, given any minimal solution $E_-$, we can view the edges as deleted in a sequence, such each edge is in a diamond when it is deleted.  According to Lemma~\ref{lem:type-2}, a type-\textsc{ii} maximal clique would remain so during the course.  If a vertex is not in any type-\textsc{i} maximal clique of $G$, then by Proposition~\ref{obs:diamond-vertices}, no edge incident to it will be deleted.  These vertices and edges are thus \textit{irrelevant} to the problem; we can actually delete them from the graph.
\begin{redrule}\label{rul:DD-not-part}
  Delete all edges and vertices not in any maximal clique of type {\sc i}. 
\end{redrule}
\begin{lemma}
	Rule~\ref{rul:DD-not-part} is safe. 
\end{lemma}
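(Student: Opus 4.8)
The plan is to reduce both directions to a single structural fact about edges that lie in no type-\textsc{i} maximal clique. First I would fix notation: let $R$ be the set of edges of $G$ contained in no type-\textsc{i} maximal clique, let $V_I$ be the set of vertices lying in some type-\textsc{i} maximal clique, and write $G^\circ$ for the graph produced by the rule. The basic observation is that each edge of $R$ lies in a \emph{unique} maximal clique, and that clique is of type \textsc{ii}: if an edge were contained in two distinct maximal cliques, both would contain its two endpoints and hence both would be type \textsc{i}. Moreover every edge incident to a vertex outside $V_I$ lies in $R$, so deleting $R$ isolates exactly those vertices; thus $G^\circ$ differs from $G$ only by the removal of the edge set $R$ (plus now-isolated vertices, which are irrelevant to diamonds), and $E(G^\circ)$ is precisely the set of edges lying in some type-\textsc{i} maximal clique.

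Next I would show that the two instances share their relevant solutions. Applying Proposition~\ref{obs:diamond-vertices} together with Lemma~\ref{lem:type-2} specialized to $E_+=\emptyset$ (exactly as the surrounding paragraph indicates), every type-\textsc{ii} maximal clique of $G$ survives a minimum deletion solution, so no edge inside a type-\textsc{ii} clique—in particular no edge of $R$—is ever deleted. Hence a minimum solution $E_-$ of $G$ satisfies $E_-\subseteq E(G^\circ)$, while conversely any solution $F$ of $G^\circ$ is automatically a legal deletion set in $G$ because $F\subseteq E(G^\circ)\subseteq E(G)$. Consequently the whole content of safety reduces to the biconditional: for every $F\subseteq E(G^\circ)$, the graph $G-F$ is diamond-free if and only if $G^\circ-F$ is diamond-free. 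Granting this, the ``only if'' direction of the lemma follows by taking a minimum solution $E_-$ of $G$ (so $E_-\subseteq E(G^\circ)$ solves $G^\circ$), and the ``if'' direction by taking any solution $F$ of $G^\circ$ (which then solves $G$).

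For the biconditional the crux is that $G^\circ-F$ is a (non-induced) subgraph of $G-F$ whose only absent edges are those of $R$, so a four-vertex set can induce a diamond in one of the two graphs but not the other only when some edge of $R$ is present among those four vertices—either in the role of a diamond edge, or as the sixth edge completing the four vertices into a $K_4$. I would rule both possibilities out uniformly. Suppose $\{a,b,c,d\}$ forms a diamond with cross edge $ab$ and missing edge $cd$ in the relevant graph, and suppose some edge of $R$ occurs among these four vertices. That $R$-edge lies in a unique maximal clique $K$, necessarily of type \textsc{ii}; tracing the configuration, the triangle of the diamond containing this $R$-edge must lie entirely in $K$, which in particular places $ab$ in $K$, and since $K$ is type \textsc{ii} the edge $ab$ in turn has $K$ as its unique maximal clique, so the second triangle also lies in $K$ and all four vertices lie in $K$. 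But then the five genuine diamond edges would all be edges of the single type-\textsc{ii} clique $K$, contradicting that those edges must lie in type-\textsc{i} maximal cliques (being edges of $E(G^\circ)$), whereas the $R$-edge does not. This single argument simultaneously shows that a diamond of $G-F$ uses only edges of $E(G^\circ)$ and vertices of $V_I$ (so it persists in $G^\circ-F$), and that a diamond of $G^\circ-F$ cannot be ``filled in'' to a $K_4$ by a surviving $R$-edge of $G-F$ (so it persists in $G-F$).

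The main obstacle I anticipate is precisely the non-monotonicity of edge deletion with respect to induced diamonds: since a $K_4$ minus one edge is a diamond, deleting edges can \emph{create} diamonds, so one cannot simply invoke ``a subgraph of a diamond-free graph is diamond-free.'' All the weight of the proof therefore sits in the third paragraph's clique-tracing step, which certifies that the only edges we discard (those in $R$) can never be an edge of a genuine diamond nor the completing edge of a $K_4$. The two ingredients that make this go through are the uniqueness of the type-\textsc{ii} maximal clique through an $R$-edge and the fact that $F\cap R=\emptyset$; I expect the careful enumeration of which edge of the diamond lies in $R$ to be the only delicate bookkeeping, and it collapses because every edge of a diamond shares a triangle with the cross edge.
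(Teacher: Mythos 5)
Your proof is correct and takes essentially the same route as the paper's: both directions come down to tracing an edge that lies in no type-\textsc{i} maximal clique to its \emph{unique} (type-\textsc{ii}) maximal clique $K$, collapsing all four vertices of the offending diamond into $K$, and reading off a contradiction, which is exactly the paper's forward and backward arguments packaged as one biconditional. Two small points to tighten: Lemma~\ref{lem:type-2} is stated for minimum solutions of the \emph{editing} instance, so for a minimum deletion solution you should either rerun its replacement argument (Proposition~\ref{lem:replacement}) directly or simply intersect $E_-$ with $E(G^\circ)$ as the paper does; and in the sub-case where the diamond lives in $G-F$, the correct punchline is not that the five diamond edges ``must lie in $E(G^\circ)$'' (that is the conclusion you are trying to establish there), but that once all four vertices sit inside the type-\textsc{ii} clique $K$, the missing edge $cd$ is an edge of $G$ in $R$, hence outside $F$, hence present in $G-F$, so the four vertices induce a $K_4$ rather than a diamond.
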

\begin{proof}
	Let $G'$ be the graph obtained by deleting all edges in type-{\sc ii} maximal cliques.  Note that a vertex is not in any type-{\sc i}  maximal clique of $G$ if and only if it is isolated in $G'$.  Therefore, to show the safeness of Rule~\ref{rul:DD-not-part}, it suffices to prove that $(G,k)$ is a yes-instance if and only if $(G',k)$ is a yes-instance. 
	
	Let $E_-$ be a minimum solution of $(G, k)$, and let $E'_- = E_-\cap E(G')$.  We claim that $G' - E'_-$ is diamond-free.  Suppose for contradiction that $G'- E'_-$ contains a diamond on $\{x,u,v,y\}$ with cross edge $u v$.  Then $xy\in E(G) \setminus E(G')$, and $\{x,u,v,y\}$ is a clique in $G - E_-$, also in $G$.  The edge $xy$ is not in any type-{\sc i} maximal clique of $G$, and hence $\{x,u,v,y\}$ is a clique of $G$, and part of a type-{\sc ii} maximal clique of $G$.  Hence all edges of this diamond are in $E(G) \setminus E(G')$, a contradiction. 
	
	For the other direction, let $E'_-$ be a minimum solution of $(G', k)$. We claim that $G - E'_-$ is diamond-free.  Suppose for contradiction that $G - E'_-$ contains a diamond on $\{x,u,v,y\}$ with cross edge $u v$.  Then at least one of the five edges of the diamond, i.e., $\{ux, vx, uv, uy, vy\}$, is in $E(G) \setminus E(G')$. Assume without loss of generality that one edge in the triangle on $\{u, v, x\}$ belongs to $E(G) \setminus E(G')$.  Then the maximal clique $K$ of $G$ containing $u,v,x$ is of type {\sc ii}. Now $y$ is adjacent to at least two vertices of $K$ in $G$, hence $y$ must be  in $K$ as well.  But then $xy$ would be in $E(G) \setminus E(G')$, and hence not in $E'_-$; in other words, $x y$ is an edge in $G - E'_-$, a contradiction.
	This concludes the proof.
\end{proof}

After the  application of Rule~\ref{rul:DD-not-part}, all the maximal cliques in the graph are of type {\sc i}.
% The following lemma is a version of Lemma~\ref{lem:bigT1}, tailored for the deletion problem.
% \begin{lemma}
%   \label{lem:bigT1:deletion}
%   Let $(G,k)$ be a yes-instance and let $E_-$ be a minimum solution. Then every maximal 
%   clique of size at least $k+3$ in $G$ is a maximal clique in $G-E_-$. Further, let $K_1$ and $K_2$ be two
%   maximal cliques each with at least $k+3$ vertices. Then $|K_1\cap K_2|\leq 1$.\todo{\tiny S:new lemma. please verify.}
% \end{lemma}
% \begin{proof}
%   The proof of the first assertion is exactly the same as that of Proposition~\ref{pro:untouchable-edge}(\ref{item:untouchable-edge}). 
%   Due to the first assertion, no edges are deleted from $K_1$ and $K_2$. If $|K_1\cap K_2|\geq 2$, then
%   $G-E_-$ has an induced diamond, a contradiction.
% \end{proof}

\begin{redrule}\label{rul:DD-big-t1}
  If there is a vertex $x$ not in any small type-\textsc{i} maximal clique, delete it.
\end{redrule}

\begin{lemma}
  Rule~\ref{rul:DD-big-t1} is safe. 
\end{lemma}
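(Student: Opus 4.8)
The \emph{only if} direction is immediate and identical to the previous safeness lemmas: restricting any solution of $(G,k)$ to $G-x$ gives a solution of $(G-x,k)$, and $(G-x,k)$ stays reduced. So the plan is to concentrate on the \emph{if} direction. Let $E_-$ be a solution of $(G-x,k)$ and put $G^*=G-E_-$. Because $G^*-x=(G-x)-E_-$ is diamond-free and $E_-$ contains no edge incident to $x$ (hence $N_{G^*}(x)=N_G(x)$, which I write $N(x)$), by Proposition~\ref{obs:diamond-vertices} it suffices to show that every maximal clique of $G^*$ containing $x$ has type \textsc{ii}; equivalently, that $x$ lies in no diamond of $G^*$.

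The structural input is that, Rule~\ref{rul:DD-not-part} having been applied, every maximal clique of $G$ is of type \textsc{i}, so the hypothesis of the rule forces every maximal clique $K_1,\dots,K_m$ of $G$ containing $x$ to be big. First I would show that each $K_i\setminus\{x\}$ is a maximal clique of $G-x$: it has at least $3k+1$ vertices, and a vertex $w\notin K_i$ adjacent to all of it would place $K_i\setminus\{x\}$ inside a big maximal clique distinct from $K_i$ yet sharing at least two vertices with it, contradicting Proposition~\ref{lem:bigT1:deletion}. Next, reading $E_-$ as an editing solution of the yes-instance $(G-x,k)$ with $E_+=\emptyset$, Proposition~\ref{pro:untouchable-edge}(i) rules out deleting any edge inside $K_i$, since each such edge keeps at least $3k-1\ge k+1$ pairwise-adjacent common neighbours within $K_i\setminus\{x\}$. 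Consequently each $K_i$ is still a clique of $G^*$, and each $K_i\setminus\{x\}$ is still a maximal clique of the diamond-free graph $G^*-x$.

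These facts let me identify the maximal cliques of $G^*$ through $x$ exactly: such a clique restricts to a clique of $G^*[N(x)]\subseteq G[N(x)]$, so it sits inside some $K_i\setminus\{x\}$, and since $K_i$ is again a clique of $G^*$, maximality forces it to equal $K_i$. It remains to show that each $K_i$ has type \textsc{ii} in $G^*$. Suppose a maximal clique $K'\neq K_i$ of $G^*$ meets $K_i$ in at least two vertices. If $x\in K'$ then $K'$ is one of the $K_j$, and $K_i\cap K_j=\{x\}$ has a single vertex, a contradiction. If $x\notin K'$, pick $u,v\in K_i\cap K'$ and $y\in K'\setminus K_i$; then in the diamond-free graph $G^*-x$ the vertex $y$ is adjacent to two vertices of the maximal clique $K_i\setminus\{x\}$, which forces $y$ to be adjacent to all of $K_i\setminus\{x\}$ and contradicts its maximality. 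Hence no $K_i$ has type \textsc{i}, so $x$ is in no diamond and $G^*$ is diamond-free, i.e.\ $E_-$ solves $(G,k)$.

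I expect the penultimate step---ruling out a clique $K'$ that crosses a big $K_i$ while avoiding $x$---to be the crux. The idea that defuses it is that $G^*-x$ is \emph{already} diamond-free and still contains $K_i\setminus\{x\}$ as an intact maximal clique, so any witness $y$ for a type-\textsc{i} $K_i$ instantly yields a diamond in $G^*-x$; the two ingredients that guarantee this intactness are the untouchability of Proposition~\ref{pro:untouchable-edge}(i) and the sharing bound of Proposition~\ref{lem:bigT1:deletion}. It is worth noting that, in contrast with Lemma~\ref{lem:rul:big-t1}, neither a size threshold such as $|K|\ge 3k+3$ nor the minimality of $E_-$ is required, precisely because $E_+=\emptyset$ in the deletion setting.
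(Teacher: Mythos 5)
Your proof is correct, and its skeleton matches the paper's: after Rule~\ref{rul:DD-not-part} every maximal clique through $x$ is big, Proposition~\ref{pro:untouchable-edge}(i) makes those cliques immune to deletion, and Proposition~\ref{lem:bigT1:deletion} prevents two of them from overlapping in more than $\{x\}$. Where you diverge is in the final contradiction. The paper supposes two maximal cliques $K_1\ni x$ and $K_2$ of $G^*$ share two vertices, locates a nonadjacent pair $x'\in K_1\setminus(\{x\}\cup K_2)$ and $y\in K_2\setminus K_1$ (with a small side argument for why such a pair exists), and exhibits a diamond $\{x',u,v,y\}$ already present in $G-x$ whose only deletable edges are $uy$ and $vy$, contradicting that $K_2$ survives as a clique of $G^*$. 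You instead first establish that each $K_i\setminus\{x\}$ remains a \emph{maximal} clique of the diamond-free graph $G^*-x$, and then observe that any external vertex adjacent to two of its members immediately yields a diamond (or a maximality violation) in $G^*-x$ itself. Your route is slightly cleaner --- it dispenses with the hunt for the nonadjacent pair and mirrors the structure of the editing-side Lemma~\ref{lem:rul:big-t1} --- at the cost of the extra intermediate claims that $K_i\setminus\{x\}$ is maximal in $G-x$ and in $G^*-x$; both arguments rest on the same two propositions, and, as you note, neither actually needs minimality of $E_-$.
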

\begin{proof}
	Let $E_-$ be a minimum solution of $(G - x, k)$.  We show that $G^* = G - E_-$ is also diamond-free.  Suppose for contradiction that there are maximal cliques $K_1$ and $K_2$ in $G^*$  such that $|K_1\cap K_2| \ge 2$.  At least one of them contains $x$; assume without loss of generality $x\in K_1$.  Let $K$ be a maximal clique of $G$ containing $K_1$.  By the assumption ($x$ is not in any small maximal clique), $K$ is big.   By Proposition~\ref{pro:untouchable-edge}(i), no edge in the clique $K\setminus \{x\}$ can be deleted by $E_-$. Hence $K = K_1$ and $K_2\not\subseteq K$.  
	By Proposition~\ref{lem:bigT1:deletion}, $K_2$ is small, and $x \notin K_2$. 
	We can find a pair of nonadjacent vertices $x' \in K\setminus \{\{x\} \cup K_2 \}$ and $y \in K_2\setminus K$. If no such pair of vertices exists, then $K_2$ contains all vertices in $K\setminus \{x\}$ and is big, a contradiction. 
	Let $u, v$ be two vertices in $K\cap K_2$.  Then $\{x', u, v, y\}$ induces a diamond in $G - x$ and one of $uy, vy$ has to be deleted by $E_-$, contradicting that $K_2$ is a maximal clique of $G^*$.  This concludes the proof of this lemma.
\end{proof}
% \begin{redrule}\label{rul:DD-big-t1}
% 	For each big maximal clique of type {\sc i}, if it contains more than $k+2$ vertices that are not part of any small maximal cliques of type {\sc i}, then arbitrarily mark $k+3$ of them; otherwise mark all such vertices. 
% 	If there is an unmarked vertex $x$ which participates in no small maximal cliques of type {\sc i}, delete it.
% \end{redrule}

% \begin{lemma}
%   Rule~\ref{rul:DD-big-t1} is safe. 
% \end{lemma}
% \begin{proof}
% 	Let $E_-$ be a minimum solution of $(G - x, k)$. 
% 	We show that $G^* = G - E_-$ is also diamond-free.   
% 	Suppose for contradiction that $K_1$ and $K_2$ are two maximal cliques sharing an edge $uv$
%         in $G^*$. 
% 	Assume that $x$ is in $K_1$. 
% 	Let $K$ be a maximal clique in $G$ containing $K_1$. 
% 	Then $K$ is big maximal clique of type {\sc i}, as $x$ is deleted form $K$. 
% 	If there is a vertex $y \in K_2 \setminus K$, then by Proposition~\ref{pro:untouchable-edge}(i), one of $uy, vy$ has to be deleted by $E_-$, contradicting to $K_2$ being a maximal clique of $G^*$. 

% 	Hence $K_2 \subset K$. Then at least one edge is deleted from the clique $K$: Otherwise $K_1 \cup K_2$ is a clique, and $K_1$ and $K_2$ cannot be two maximal cliques in $G^*$. 
% 	However this is impossible by Proposition~\ref{pro:untouchable-edge}(i), as $|K| \ge k+3$.
% \end{proof}

It is clear that all the four rules can be applied in polynomial time.  Indeed, a simplified version of procedure \texttt{partition} would suffice.
\begin{lemma}
  Let $(G, k)$ be a yes-instance of the diamond-free edge deletion problem.  If none of  Rules~\ref{rul:DD-1}--\ref{rul:DD-big-t1} is applicable, then $|V(G)| = O(k^3)$.
\end{lemma}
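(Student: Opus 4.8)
The plan is to show that the two deletion-specific rules leave a graph in which \emph{every} vertex sits in a small \tone\ maximal clique, so that the vertex set coincides with $S(G)$, and then to read off the bound $|S(G)| = O(k^3)$ from the counting already carried out for the editing problem.

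First I would pin down the structure enforced by Rules~\ref{rul:DD-not-part} and~\ref{rul:DD-big-t1}. As already observed in the text, once Rule~\ref{rul:DD-not-part} is inapplicable every maximal clique of $G$ is of type \textsc{i}; once Rule~\ref{rul:DD-big-t1} is also inapplicable, no vertex escapes all small \tone\ maximal cliques. Together these give $V(G)=\bigcup_{K\in\sone{G}}K=S(G)$, so it suffices to bound $|S(G)|$.

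The second and central step is to invoke Lemma~\ref{lem:smallT1-vertex-count}, which yields $|S(G)|\le 18k^3+2k$. The point to verify is that this lemma, stated and proved in the editing setting through a minimum \emph{editing} solution, transfers verbatim to the deletion problem. Here a minimum solution has $E_+=\emptyset$, so $E_\pm=E_-$; inspecting the proof of Lemma~\ref{lem:smallT1-vertex-count} shows that it uses only (a) Proposition~\ref{pro:edited-neighborhood} to get $|Y|\le 3k^2$ and the bound $|N_G(x)\cap N_G(y)|\le 3k$ for deleted edges, and (b) a diamond argument certifying that each clique of $\mathcal{K}'$ contains an endpoint of some deleted edge together with a common neighbor of that edge. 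Both survive unchanged: the estimate $|X|\le 2k$ and the count of at most $6k\cdot|E_-|\le 6k^2$ cliques in $\mathcal{K}'$ are identical, and Proposition~\ref{pro:edited-neighborhood}(i) itself rests only on Rule~\ref{rul:DD-1} being inapplicable and on Proposition~\ref{pro:untouchable-edge}(i)---facts that hold for any reduced deletion instance.

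Combining the two steps gives $|V(G)|=|S(G)|\le 18k^3+2k=O(k^3)$, as claimed. I expect the only genuine obstacle to lie in the bookkeeping of the previous paragraph: one must confirm that nothing in the proof of Lemma~\ref{lem:smallT1-vertex-count} silently relied on edge additions (e.g.\ via Lemma~\ref{lem:addition}) or on the non-edge sunflower rule, which is absent from the deletion setting. Since that proof never adds an edge and never appeals to that rule---its entire counting is against deleted edges $E_-$---the transfer goes through and the cubic bound follows immediately.
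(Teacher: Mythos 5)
Your proposal is correct and follows essentially the same route as the paper: Rule~\ref{rul:DD-big-t1} being inapplicable forces $V(G)=S(G)$, and Lemma~\ref{lem:smallT1-vertex-count} bounds $|S(G)|$ by $18k^3+2k$. Your extra check that the counting in Lemma~\ref{lem:smallT1-vertex-count} survives with $E_+=\emptyset$ and without the non-edge sunflower rule is exactly the justification the paper delegates to its blanket remark at the start of Section~\ref{sec:deletion}, so it is welcome diligence rather than a different argument.
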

\begin{proof}
	We claim that if neither of Rules~\ref{rul:DD-not-part} and \ref{rul:DD-big-t1} is applicable to a reduced yes-instance $(G,k)$, then $|V(G)| = O(k^3)$. 
	By Lemma~\ref{lem:smallT1-vertex-count}, there are at most $O(k^3)$ vertices in small \tone\ maximal cliques in $G$. After the exhaustive application of Rule~\ref{rul:DD-big-t1}, every vertex is in some small \tone\ maximal clique. Therefore, $G$ contains $O(k^3)$ vertices. 
\end{proof}

\bibliography{diamond-editing}

\end{document}